\newcommand\ldot{\mathpunct{.}}
\newcommand{\Lang}{\mathcal L}
\newcommand{\A}{\mathcal A}
\newcommand{\variables}{\mathcal V}
\newcommand{\informedness}{IC}
\renewcommand{\models}{\vDash}
\newcommand{\donotshow}[1]{}
\newcommand{\vc}{\mathtt{c}}
\newcommand{\vin}{\mathtt{in}}
\newcommand{\vout}{\mathtt{out}}
\newcommand{\vt}{\mathtt{t}}
\newcommand{\vic}{\mathtt{ic}}
\newcommand{\bosyhyper}{\textsc{BoSyHyper}\xspace}
\theoremstyle{plain}
\newtheorem*{theorem*}{Theorem}
\newtheorem*{lemma*}{Lemma}
\newcommand{\twopartdef}[4]
{
	\left\{
		\begin{array}{ll}
			#1 & \mbox{if } #2 \\
			#3 & \mbox{if } #4
		\end{array}
	\right.
}
\begin{document}

\title{Information Flow Guided Synthesis\\ (Full Version)\thanks{This work was funded by the German Israeli Foundation (GIF) Grant No. I-1513-407./2019. and by the DFG grant 389792660 as part of \href{https://perspicuous-computing.science}{TRR~248 -- CPEC}.}}
\titlerunning{Information Flow Guided Synthesis}

\author{}
\institute{}
\authorrunning{B. Finkbeiner and N. Metzger and Y. Moses}
\author{Bernd Finkbeiner\inst{1}
\and 
Niklas Metzger\inst{1}
\and 
Yoram Moses\inst{2}}

\institute{CISPA Helmholtz Center for Information Security, Saarland, Germany \email{\{finkbeiner, niklas.metzger\}@cispa.de} 
\and
The Andrew and Erna Viterbi Faculty of Electrical and Computer\\ Engineering and the Taub Faculty of Computer Science, Technion, Israel\\
\email{moses@ee.technion.ac.il}
}

\maketitle

\begin{abstract}

Compositional synthesis relies on the discovery of assumptions, i.e., restrictions on the behavior of the remainder of the system that allow a component to realize its specification.
In order to avoid losing valid solutions, these assumptions should be \emph{necessary} conditions for realizability. However, because there are typically many different behaviors that realize the same specification,  necessary behavioral restrictions often do not exist.
In this paper, we introduce a new class of assumptions for compositional synthesis, which we call \emph{information flow assumptions}. Such assumptions capture an essential aspect of distributed computing, because components often need to act upon information that is available only in other components. The presence of a certain flow of information is therefore often a necessary requirement, while
the actual behavior that establishes the information flow is  unconstrained. 
In contrast to behavioral assumptions, which are properties of individual computation traces, information flow assumptions are \emph{hyperproperties}, i.e., properties of sets of traces. We present a method for the automatic derivation of information-flow assumptions from a temporal logic specification of the system. We then provide a technique for the automatic synthesis of component implementations based on information flow assumptions. This provides a new compositional approach to the synthesis of distributed systems. We report on encouraging first experiments with the approach, carried out with the \bosyhyper synthesis tool.

\end{abstract}

\section{Introduction}\label{sec:introduction}
In \emph{distributed synthesis}, we are interested in the automatic translation of a formal specification of a   distributed system's desired behavior into an  implementation that satisfies the specification~\cite{PnueliR90}. What makes distributed synthesis far more  interesting than the standard synthesis of reactive systems, but also more challenging, is that the result consists of a set of implementations of subsystems, each of which operates based only on  partial knowledge of the global system state. 
While algorithms for distributed synthesis have been studied since
the 1990s~\cite{PnueliR90,kv01,fs05}, their high complexity has resulted in applications of distributed synthesis being, so far, very limited.

One of the most promising approaches to making distributed synthesis more scalable is \emph{compositional synthesis}~\cite{DBLP:journals/ijfcs/ScheweF07,KupfermanPV06,FiliotJR10,DammF14,FinkbeinerP20}. 
The compositional synthesis of a distributed system with two processes, $p$ and $q$, avoids the construction of the product of $p$ and $q$ and instead focuses on one process at a time.
Typically, it is impossible to realize one process without making certain assumptions about the other process.
Compositional  synthesis  therefore  critically  depends  on  finding  the  assumption that $p$ must make about $q$, and vice versa: once the assumptions are known, one can build each individual process, relying on  
the fact that the assumption will be satisfied by the synthesized implementation of the other process. Ideally, the assumptions should be both \emph{sufficient} (i.e., the processes are realizable under the assumptions) and \emph{necessary} (i.e., any implementation that satisfies the specification would also satisfy the assumptions). Without sufficiency, the synthesis cannot find a compositional solution; without necessity, the synthesis loses valid solutions. While sufficiency is obviously checked as part of the synthesis process, it is often impossible to find necessary conditions, because there the specifications can be realized by many different behaviors. Any specific implementation would lead to a specific assumption; however, this implementation is only known once the synthesis is complete, and an assumption that is satisfied by \emph{all} implementations often does not exist.

In this paper, we propose a way out of this chicken-and-egg type of situation. 
Previous work on generating assumptions for compositional synthesis has focused on \emph{behavioral} restrictions on the environment of a subsystem.  We introduce a new class of more abstract assumptions that, instead, focus on  
the \emph{flow of information}.
Consider a system architecture (depicted in  \Cref{fig:architecture}) where two processes $a$
and~$b$ are linked by a communication channel $\vc$, such that $a$ can write to~$\vc$ and $b$ can read from $\vc$. Suppose also that $a$ reads a boolean input $\vin$ from the environment that is, however, not directly visible to $b$. 
We are interested in a distributed implementation for a specification that demands that $b$ should eventually output the value of input $\vin$. Since $b$ cannot observe $\vin$, its synthesis must rely on the assumption that the value of $\vin$ is communicated over the channel $\vc$ by process $a$.  Expressing this as a \emph{behavioral assumption} is difficult, because there are many different behaviors that accomplish this. 
Process $a$ could, for example, literally copy the value of $\vin$ to $\vc$. 
It could also encode the value, for example by writing to~$\vc$ the negation of the value of $\vin$.  
Alternatively, it could delay the transmission of $\vin$ by an arbitrary number of steps, and even use the length of the delay to encode information about the value of~$\vin$. 
Fixing any such communication protocol, by a corresponding behavioral assumption on $a$, would unnecessarily eliminate potential implementations of $b$.
The minimal assumption that subsystem $a$ must rely  
on is in fact an information-flow assumption, namely that $b$ will eventually learn the value of $\vin$.

We present a method that derives necessary information flow assumptions automatically.  
A fundamental difference between behavioral and information flow assumptions is that behavioral assumptions are  \emph{trace properties}, i.e., properties of individual traces; by contrast, information flow assumptions are \emph{hyperproperties}, i.e., properties of \emph{sets} of traces. 
In our example, the assumption that $a$ will eventually communicate the value of $\vin$ to $b$ is the hyperproperty that any two traces that differ in the value of~$\vin$ must eventually also differ in~$\vc$. 
The precise difference between the two traces depends on the communication protocol chosen in the implementation of $a$; however, any correct implementation of $a$ must ensure that some difference 
in~$b$'s input (on channel~$\vc$) in the two traces 
occurs, so that $b$ can then respond with a different output. 

Once we have obtained information flow assumptions for all of the subsystems, we proceed to synthesize each subsystem under the assumption generated for its environment. 
It is important to note that, at this point, the implementation of the environment is not known yet; as a result, we only know \emph{what} information will be provided to process~$b$, 
but not \emph{how}. 
This also means that we cannot yet construct an executable implementation of the process under consideration; after all, this implementation would need to correctly decode the information provided by its partner process.
Clearly, we cannot determine how to \emph{decode} the information before we know how the implementation of the sending process \emph{encodes} the information!

Our solution to this quandary is to synthesize a prototype of an implementation for the process that works with \emph{any} implementation of the sender, as long as the sender satisfies the information flow requirement. 
The prototype differs from the actual implementation in that it has access to the original (unencoded) information. Because of this information the prototype, which we call a {\it hyper implementation}, can determine the correct output that satisfies the specification. Later, in the actual implementation, the information is no longer available in its original, unencoded form, but must instead be decoded from the communication received from the environment.
However, the information flow assumption guarantees that this is actually possible, and access to the original information is, therefore, no longer necessary.

In Section~\ref{sec:bit:trans}, we explain our approach in more detail, continuing the discussion of the bit transmission example mentioned above. The paper then proceeds to make the following contributions:
\begin{itemize}
    \item We introduce the notion of \emph{necessary information flow assumptions} (Section \ref{sec:Necessary:Information:Flow:Assumptions}) for distributed systems with two processes and present a method for the automatic derivation of such assumptions from process specifications given in linear-time temporal logic (LTL).
    \item We strengthen information flow assumptions to the notion of \emph{time-bounded} information flow assumptions (Section \ref{sec:Time:Bounded:Information:Flow}), which characterizes information that must be received in finite time. We introduce the notion of \emph{uniform distinguishability} and prove that uniform distinguishability guarantees the necessity of the information flow assumption.
    \item We introduce the notion of \emph{hyper implementations} (Section~\ref{sec:behaviour}) and provide a synthesis method for their automatic construction. We also explain how to transform hyper implementations into actual process implementations.
    \item We present a \emph{practical approach} (Section~\ref{sec:practical}) that simplifies the synthesis for cases where the information flow assumption refers to a finite amount of information.
    \item We report on encouraging experimental results (Section~\ref{sec:experiments}). 
\end{itemize}

\section{The Bit Transmission Problem}\label{sec:bit:trans}
We use the \textit{bit transmission} example from the introduction to motivate our approach. The example consists of two processes $a$ and $b$ that are combined into the distributed architecture shown in \Cref{fig:architecture}.
Process $a$ observes the (binary) input of the environment through variable $\vin$ and can communicate with the second process $b$ via a channel (modeled by the shared variable~$\vc$).
Process $b$ observes its own local input from $a$ and has a local output $\vout$.
We are interested in synthesizing an implementation for our distributed system consisting of two strategies, one for each process, whose  
combined behavior satisfies the specification. In this example, the specification for process~$b$ is to transmit the initial value of~$\vin$, an input of~$a$,  to $b$'s own output; this is expressed by the linear-time temporal logic (LTL) formula $\varphi_{b} =\vin \LTLequ \LTLeventually \vout$. 
The specification does not restrict~$a$'s behavior, such that $\varphi_{a}=\mathit{true}$.

Since the value of~$\vout$ is controlled by $b$, whereas~$\vin$ is determined by the environment and observed by $a$, this specification forces $b$ to react to an input that~$b$  neither observes nor controls. 
To satisfy the goal, $\vout$ must remain $\mathit{false}$ forever if~$\vin$ is initially $\mathit{false}$, while $\vout$ must eventually become $\mathit{true}$ at least once if $\vin$ starts with value~$\mathit{true}$.
Indeed, in order to set~$\vout$ to $\mathit{true}$, process~$b$ must \emph{know} that~$\vin$ is initially $\mathit{true}$, which can only be satisfied via information flow from $a$ to $b$.
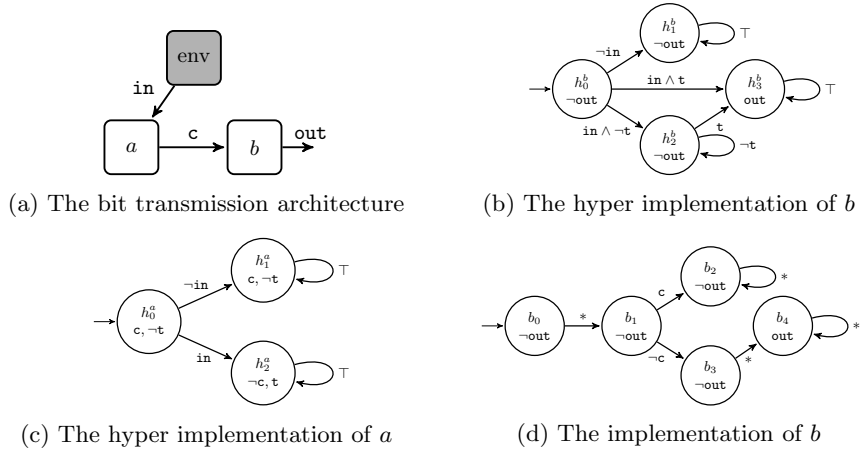
\begin{figure}[t]
     \centering
     \begin{subfigure}[b]{0.46\textwidth}
         \centering
         \resizebox{.85\textwidth}{!}{
         \tikzstyle{state}=[draw, rectangle, fill=none, minimum width=.8cm, 
minimum height = .8cm, rounded corners=1mm, align=center, thick]

\begin{tikzpicture}[->,>=stealth',shorten >= 1pt,auto]
\node[state] (p1) {%
  $~a$
};
\node (left) [left =0.7 of p1]{};
\node[state] (p2) [right= 1 of p1] {%
  $b$};
\node[state, fill=gray!60] (env) [above right = .5 and .1 of p1]{$\text{env}$};
\node[state, draw = none, minimum height=.1cm] (p2below) [right= .5 of p2] {%
 };

\path (env) edge[thick, transform canvas={yshift=0mm}] node[above left] {%
        $\vin$
      }  (p1)
      (p1) edge[thick, transform canvas={yshift=0mm}] node[above] {%
        $\vc$
      } (p2)
      (p2) edge[thick, transform canvas={xshift=0mm}] node[above right, xshift=-2mm] {
      $\vout$
      }(p2below)
      ;
\end{tikzpicture}
         }
         \caption{The bit transmission architecture}
         \label{fig:architecture}
     \end{subfigure}
     \begin{subfigure}[b]{0.53\textwidth}
         \centering
         \resizebox{.75\textwidth}{!}{
         \tikzstyle{state}=[draw, circle, fill=none, minimum width=1cm, 
minimum height = 1.3cm, align=center, thick]

\begin{tikzpicture}[->,>=stealth',shorten >= 1pt,auto]

\node[state] (p0) {%
    $h_0^b$\\
   $\neg \vout$
};

\node (left) [left = 0.5 of p0]{};

\node[state] (p1) [above right = .3 and 1 of p0] {%
  $h_1^b$\\
  $\neg \vout$
 };
\node[state] (p2) [below right= .3 and 1 of p0]{
    $h_2^b$\\
    $\neg \vout$
    };

\node[state] (p3) [right = 2.5 of p0] {%
    $h_3^b$\\
    $\vout$
  };

\path (left) edge (p0)
      (p0) edge[thick] node[above left] {%
        $\neg \vin$
      } (p1)
      (p0) edge[thick] node [above, align=center] {%
      $ \vin \wedge \vt$
      }(p3)
      (p0) edge[thick] node [below left, xshift=10pt,yshift=-2pt, align=center] {%
      $ \vin \wedge \neg \vt$
      }(p2)
       (p0) edge[loop left, thick, draw=none] node [left] {%
       $\phantom{\LTLtrue}$}
       (p0) 
     (p2) edge[thick] node[below right, xshift=-1pt] {
      $\vt$
      }(p3)
      (p3) edge[loop right, thick] node [right] {%
      $\LTLtrue$}(p3) 
      (p2) edge[loop right, thick] node [right] {%
      $\neg \vt$}(p2) 
      (p1) edge[loop right, thick] node [right] {%
      $\LTLtrue$}(p1)
      ;
\end{tikzpicture}
         }
         \caption{The hyper implementation of $b$}
         \label{fig:hyper-implementation-b}
     \end{subfigure}
     \begin{subfigure}[b]{0.46\textwidth}
         \centering
         \resizebox{.75\linewidth}{!}{
         \tikzstyle{state}=[draw, circle, fill=none, minimum width=1cm, 
minimum height = 1.4cm, align=center, thick]
\begin{tikzpicture}[->,>=stealth',shorten >= 1pt,auto]

\node[state] (p0) {%
    $h^a_0$\\
   $\vc, \neg \vt$
};

\node (left) [left = 0.5 of p0]{};
\node (above) [above = 0.05 of p1]{};

\node[state] (p1) [above right = .12 and 1.5 of p0] {%
  $h^a_1$\\
  $\vc, \neg \vt$
 };
\node[state] (p2) [below right= .12 and 1.5 of p0]{
    $h^a_2$\\
    $\neg \vc, \vt$
    };


\path (left) edge (p0)
      (p0) edge[thick] node[above left, xshift=5pt, yshift=1pt] {%
        $\neg \vin$
      } (p1)
      (p0) edge[thick] node [below, align=center] {%
      $ \vin$
      }(p2)
       (p0) edge[loop left, thick, draw=none] node [left] {%
       $\phantom{\LTLtrue}$}
       (p0) 
      (p2) edge[loop right, thick] node [right] {%
      $\LTLtrue$}(p2) 
      (p1) edge[loop right, thick] node [right] {%
      $\LTLtrue$}(p1)
      ;
\end{tikzpicture}
         }
         \caption{The hyper implementation of $a$}
         \label{fig:implemetation of -a}
     \end{subfigure}
    \begin{subfigure}[b]{0.53\textwidth}
         \centering
         \resizebox{.85\linewidth}{!}{
         \tikzstyle{state}=[draw, circle, fill=none, minimum width=1cm, 
minimum height = 1.3cm, align=center, thick]

\begin{tikzpicture}[->,>=stealth',shorten >= 1pt,auto]

\node[state] (p0){%
$b_0$\\
$\neg \vout$
};
\node[state] (p1) [right = .8 of p0]{%
    $b_1$\\
   $\neg \vout$
};

\node (left) [left = 0.5 of p0]{};

\node[state] (p2) [above right = .15 and .8 of p1] {%
  $b_2$\\
  $\neg \vout$
 };
\node[state] (p3) [below right= .15 and .8 of p1]{
    $b_3$\\
    $\neg \vout$
    };

\node[state] (p4) [right = 2 of p1] {%
    $b_4$\\
    $\vout$
  };

\path (left) edge (p0)
(p0) edge[thick] node[above] {%
        $\ast$
      } (p1)
      (p1) edge[thick] node[above left] {%
        $\vc$
      } (p2)
      (p1) edge[thick] node [below left,align=center, xshift={1mm}] {%
      $ \neg \vc$
      }(p3)
       (p1) edge[loop left, thick, draw=none] node [left] {%
       $\phantom{\ast}$}
       (p1) 
     (p3) edge[thick] node[below, xshift=1pt] {
      $\ast$
      }(p4)
      (p4) edge[loop right, thick] node [right] {%
      $\ast$}(p4) 
      (p2) edge[loop right, thick] node [right] {%
      $\ast$}(p2)
      ;
\end{tikzpicture}
         }
         \caption{The implementation of $b$}
         \label{fig:implementation-b}
     \end{subfigure}
        \caption{The distributed system of the \emph{bit transmission} protocol. The architecture is given in (a), the hyper-implementation of $b$ in (b), the local implementation of $a$ in(c), and the resulting local implementation of $b$ in (d).}
        \label{fig:bit-transmission}
\end{figure}
We can capture this information flow requirement as the following hyperproperty:
For every pair of traces that disagree on the initial value of~$\vin$, process~$a$ must (eventually) behave differently on $\vc$.
The requirement can be expressed  in HyperLTL by the formula 
$\Psi=\forall \pi,\pi'. (\vin_{\pi} \nleftrightarrow {\vin}_{\pi'}) \rightarrow \LTLeventually ({\vc}_{\pi} \nleftrightarrow {\vc}_{\pi'})$.
The information flow requirement does not restrict $a$ to behave in a particular manner; the \emph{encoding} of the information about $\vin$ on the channel $\vc$ depends on $a$'s behavior.
Under the assumption that~$a$ will behave according to the information flow requirement~$\Psi$, one can synthesize a solution of $b$ that is correct for every implementation of $a$.
Given its generality, we call such a solution a \emph{hyper implementation}, shown in  \Cref{fig:hyper-implementation-b}.
Since the point in time when the information is received by $b$ is unknown during the local synthesis process, an additional boolean variable $\vt$ is added to the specification of $b$.
This variable signals that the information has been transmitted and is later derived by $a$'s implementation.
Setting $\vout$ to \textit{true} is only allowed after $\vt$ is observed by process~$b$.
When the hyper implementation is composed with the actual implementation of~$a$, as shown in \Cref{fig:implemetation of -a}, both local specifications are satisfied.
The resulting local implementation of $b$, depicted in \Cref{fig:implementation-b}, branches only on local inputs and, together with $a$, satisfies the specification.
While changing state $b_0$ to $b_1$, process~$b$ cannot distinguish $\vin$ from $\neg \vin$. It has to wait for one time step, i.e., the first difference in outputs of process $a$, to observe the difference in the shared communication channel.
The value of $\vt$ is obtained from $a$'s implementation and set to $\mathit{true}$ with the first difference in $\vc$, forbidding the edge from $h_0^b$ to $h_3^b$in the local implementation of $b$.

\section{Preliminaries}
\paragraph{Architectures.} For ease of exposition we focus in this paper on systems with two processes. 
Let $\variables$ be a set of variables. An architecture with two black-box processes $p$ and $q$ is given as a tuple $(I_p,I_q,O_p,O_q,I_e)$, where $I_p,I_q,O_p,O_q,$ and $I_e$ are all subsets of $\variables$. $O_p$ and $O_q$ are the \emph{output variables} of $p$ and $q$. $O_e$ are the output variables of the uncontrollable environment. The three sets $O_p, O_q$ and~$O_e$ form a partition of $\variables$.
$I_p$ and $I_q$ are the \emph{input variables} of processes $p$ and~$q$, respectively. 
For each black-box process, the inputs and outputs are disjoint, i.e., $I_p \cap O_p = \emptyset$ and $I_q \cap O_q = \emptyset$. The inputs $I_p$ and~$I_q$ of the black-box processes are all either outputs of the environment or outputs of the other black-box process, i.e., $I_p \subseteq O_q \cup O_e$ and $I_q \subseteq O_p \cup O_e$. 
We assume that all variables are of boolean type.
For a set $V \subseteq \variables$, every subset $V' \subseteq V$ defines a \emph{valuation} of $V$, where
the variables in $V \cap V'$ have value $\mathit{true}$ and the variables in $V \setminus V'$ have value $\mathit{false}$.

\paragraph{Implementations.} An implementation of an architecture $(I_p,I_q,O_p,O_q,I_e)$ is a pair $(s_p, s_q)$, consisting of a strategy for each of the two black-box processes. 
A \emph{strategy} for a black-box process $p$ is a function
$s_p : (2^{I_p})^* \rightarrow (2^{O_p})$ that maps finite sequences of valuations of $p$'s input variables (i.e., \emph{histories} of inputs) to a valuation of $p$'s output variables. 
The (synchronous) \emph{composition}  $s_p || s_q$ of the two strategies is the function $s: (2^{O_e})^* \rightarrow (2^{\variables})$ that maps finite sequences of valuations of the environment's output variables to valuations of all variables: we define $s(\epsilon) = s_p(\epsilon) \cup s_q(\epsilon)$ and, for $v \in (2^{O_e})^*, x \in 2^{O_e}$, $s(v \cdot x) = (s_p(f_p(v)) \cup s_q(f_q(v)) \cup x)$, where $f_p$ and $f_q$ map sequences of environment outputs to sequences of process inputs with $f_p(\epsilon) = \epsilon,  f_p(v \cdot x) = f_p(v) \cdot ((x \cup s_q(f_q(v))) \cap I_p)$ and $f_q(\epsilon) = \epsilon,  f_q(v \cdot x) = f_p(v) \cdot ((x \cup s_p(f_p(v))) \cap I_q)$.

\paragraph{Specifications.}
Our specifications refer to traces over the set $\variables$ of all variables. In general, for a set $V \subseteq \variables$ of variables, a \emph{trace} over $V$ is an infinite sequence $x_0x_1x_2\ldots \in (2^{\variables})^\omega$ of valuations of $V$. A \emph{specification} $\varphi \subseteq (2^{\variables})^\omega$ is a set of traces over $\variables$.
Two traces of disjoint sets $V,V'\subset\variables$ can be \emph{combined} by forming the union of their valuations at each position, i.e., 
$x_0x_1x_2\ldots \sqcup y_0y_1y_2\ldots = (x_0 \cup y_0)(x_1\cup y_1)(x_2\cup y_2) \ldots$. Likewise, the \emph{projection} of a trace onto
a set of variables $V' \subseteq \variables$ is formed by intersecting the valuations with $V'$ at each position:
$x_0x_1x_2\ldots \downarrow_{V'} = (x_0 \cap V')(x_1 \cap V')(x_2 \cap V')\ldots$.

For our specification language, we use propositional linear-time temporal logic (LTL)~\cite{LTL}, with the set $\variables$ of variables as atomic propositions and the usual temporal operators Next $\LTLnext$, Until $\LTLuntil$, Globally  $\LTLglobally$, and 
Eventually $\LTLeventually$.
System specifications are given as a conjunction $\varphi_p \wedge \varphi_q$ of two LTL formulas, where $\varphi_p$ refers only to variables in $O_p \cup O_e$, i.e., the formula relates the outputs of process $p$ to the outputs of the environment, and $\varphi_q$ refers only to variables in $O_q \cup O_e$. The two formulas represent the \emph{local specifications} for the two black-box processes. 
An implementation $s=(s_p, s_q)$ defines a set of traces 
\[\begin{array}{r}\mathit{Traces}(s_p, s_q) = \{ x_0x_1\ldots \in (2^{O})^\omega \mid 
x_k = s(i_0i_1\ldots i_{k-1}) \mbox{ for all } k \in \mathbb N \qquad\\ \mbox{ for some } i_oi_1i_2\ldots \in (2^{O_e})^\omega \}.\end{array}\] 
We say that the implementation \emph{satisfies} the specification if the traces of the implementation are contained in the specification, i.e., $\mathit{Traces}(s_p, s_q) \subseteq \varphi$.

\paragraph{The synthesis problem.}
Given an architechture and a specification~$\varphi$, the synthesis problem is to find an implementation~$s$  that satisfies~$\varphi$.
We say that a specification~$\varphi$ is \emph{realizable} in a given architecture if such an implementation exists, and \emph{unrealizable} if not.

\paragraph{Hyperproperties.}
We capture information-flow assumptions as hyperproperties. 
A \emph{hyperproperty over~$\variables$}  is a set $H \subseteq 2^{(2^{\variables})^\omega}$ of sets of traces over $\variables$~\cite{ClarksonS10}. An implementation $(s_p, s_q)$ satisfies the hyperproperty $H$ iff its traces are an element of $H$, i.e., $\mathit{Traces}(s_p, s_q) \in H$. 
A specification language for hyperproperties is the temporal logic HyperLTL~\cite{HyperLTL}. HyperLTL extends LTL with quantification over trace variables. The syntax of HyperLTL is given by the following grammar $
\varphi \Coloneqq \forall \pi \ldot \varphi \mid \exists \pi \ldot \varphi \mid \psi$ and $
\psi \Coloneqq  v_\pi 
\mid \neg \psi \mid \psi \land \psi \mid \LTLnext \psi \mid \psi \LTLuntil \psi
$
where $v_\pi\in \variables$ is a variable 
and $\pi \in \mathcal T$ is a trace variable. Note that the output variables are indexed by trace variables.
The quantification over traces makes it possible to express properties like ``\textit{$\psi$ must hold on all traces}'', which is expressed by $\forall \pi.~\psi\,$. Dually, one can express that ``\textit{there exists a trace on which $\psi$ holds}'', denoted by $\exists \pi.~\psi\,$. 
The temporal operators are defined as in LTL. 

In some cases, a hyperproperty can be expressed in terms of a binary relation on traces. 
A relation $R \subseteq (2^{\variables})^\omega \times (2^{\variables})^\omega$ of pairs of traces defines the hyperproperty $H$, where a set $T$ of traces is an element of $H$ iff for all pairs $\pi, \pi' \in T$ of traces in $T$ it holds that $(\pi, \pi') \in R$.
We call a hyperproperty defined in this way a  
\emph{2-hyperproperty}.
In HyperLTL, 2-hyperproperties are expressed as formulas with two universal quantifiers and no existential quantifiers.
A 2-hyperproperty can equivalently be represented as a set of infinite sequences over the product alphabet $\Sigma^2$: for a given 2-hyperproperty $R \subseteq \Sigma^\omega \times \Sigma^\omega$, let $R' = \{ (\sigma_0, \sigma_0') (\sigma_1,\sigma_1') \ldots \mid (\sigma_0\sigma_1\ldots, \sigma_0'\sigma_1'\ldots) \in R \}$. This representation is convenient for the use of automata to recognize 2-hyperproperties. 
\paragraph{Automata.}A \emph{nondeterministic finite automaton} (NFA) is a tuple $\A = (Q,\Sigma, q_0,\\ F, \delta)$, where $Q$ denotes a finite set of states, $\Sigma$ is a finite alphabet, $q_0$ is a designated initial state, $F\subseteq Q$ is the set of accepting states, and $\delta: Q \times \Sigma \rightarrow \mathcal{P}(Q)$ is the transition relation that maps a state and a letter to a set of possible successor states. 
A run of $\A$ on a finite word $w = w_0 \dots w_n \in \Sigma^*$ is a sequence of states $ r = q_0\dots q_{n+1} \in Q^*$ with $q_{i+1} \in \delta(q_i,w_i)$ for all $0 \le i \le n$. The run $r$ is accepting if $q_{n+1} \in F$. The set of all accepted words by an automaton $\A$ is called its \textit{language}, denoted by $\Lang(\A)$.
A \emph{B\"uchi automaton} $\mathcal B = (Q,\Sigma, q_0, F, \Delta)$ is an automaton over infinite words. A run of~$\mathcal B$ on an infinite word $w = w_1w_2 \dots \in \Sigma^{\omega}$  is an infinite sequence $ r = q_0q_1\dots \in Q^{\omega}$ with $q_{i+1} \in \delta(q_i,w_i)$ for all $i \in \mathbb N$. A run $r$ is accepting if there exist infinitely many $i \in \mathbb{N}$ such that $q_i \in F$. We use a B\"uchi automaton $\A$ over the alphabet $\Sigma^2$ to represent the 2-hyperproperty $R' \subseteq (\Sigma^2)^\omega$ with $\Lang(\A) = R'$.

\section{Necessary Information Flow in Distributed Systems}\label{sec:necessaryinformationflowindistributedsystems}
In reactive synthesis it is natural that the synthesized process reacts to different environment outputs.
This is also the case for distributed synthesis, where some outputs of the environment are not observable by a local process and the hidden values must be communicated to the process.
In the following we show when such information flow is necessary.
\subsection{Necessary Information Flow}\label{sec:Necessary:Information:Flow:Assumptions}
Our analysis focuses on pairs of situations for which the specification dictates a \emph{different} reaction from a given black-box process~$p$. Such pairs imply the need for
information flow that will enable $p$ to distinguish the two situations: if~$p$ cannot distinguish the two situations, it will behave in the same manner in both. Consequently, the specification will be violated, no matter how~$p$ is implemented, in at least one of the two situations.
A process $p$ needs to satisfy a local specification $\varphi_p$, which relates its outputs~$O_p$ to the outputs $O_e$ of the environment. (Recall that~$O_e$ may contain inputs to the other black-box process.) We are therefore interested in pairs of traces over $O_e$ for which~$\varphi_p$ does \emph{not} admit a common valuation of $O_p$. We collect such pairs of traces in a \emph{distinguishability relation}, denoted by $\Delta_p$:

\begin{definition}[Distinguishability]\label{def:distinguishability:relation}
Given a local specification~$\varphi_p$ for process~$p$, the \emph{distinguishability relation} $\Delta_p$ is the set of pairs of traces over $O_e$ (environment outputs) such that no trace over~$O_p$ satisfies $\varphi_p$ in combination with both traces in the pair. Formally: 
\begin{align*}
		\Delta_{p} = \{ (\pi_e, \pi_e') \in (2^{O_e})^\omega \times &(2^{O_e})^\omega \mid \ \\ 
		 \forall \pi_p \in\ &(2^{O_p})^\omega\ldot~ \mbox{if~ } \pi_e \sqcup \pi_p  \vDash \varphi_p \mbox{ ~then~ } \pi_e' \sqcup \pi_p \nvDash \varphi_p\ \}
\end{align*}
                
\end{definition}

By definition of $\Delta_p$, process~$p$ must distinguish 
$\pi_e$ from~$\pi'_e$, 
because it cannot respond to both in the same manner. 
In our running example, $\Delta_b$ consists of all pairs of sequences of values of $\vin$ that differ in
the first value of $\vin$. Process $b$ must act differently in such situations: if  $\vin$ is initially  $\mathit{true}$ then~$b$ must eventually set $\vout$ to $\mathit{true}$, while if it starts as~$\mathit{false}$, then~$b$ must keep $\vout$ always set to $\mathit{false}$.

In general, a black-box process $p$ must satisfy its specification $\varphi_p$ despite having only partial  access to~$O_e$. The distinguishability relation therefore directly defines an \emph{information flow} requirement: In order to satisfy $\varphi_p$, enough information about $O_e$ must be communicated to~$p$ via its local inputs~$I_p$ to ensure that~$p$ can distinguish any pair of traces in $\Delta_p$.
We formalize this information flow assumption as a 2-hyperproperty, which states that if the outputs of the environment in the two traces must be distinguished, i.e, the projection on $O_e$ is in $\Delta_p$, then there must be a difference in the local inputs $I_p$:

\begin{definition}[Information flow assumption]\label{def:necessary:information:flow:guarantee}
The \emph{information flow assumption} $\psi_p$ induced by~$\Delta_p$  
is the 2-hyperproperty defined by the relation 
\[
 R = \{ (\pi, \pi') \in (2^{\variables})^\omega \times (2^{\variables})^\omega \mid
 (\pi {\downarrow_{O_e}}, \pi' {\downarrow_{O_e}}) \in \Delta_p \mbox{ then } \pi {\downarrow_{I_p}} \neq \pi' {\downarrow_{I_p}} \}
\]
\end{definition}

In our running example, the information flow assumption for process $b$ requires that on any two executions that disagree on  the initial value of $\vin$,  the values
communicated to~$b$ over the channel $\vc$ must differ at some point. 
Observe that the  information flow assumption~$\psi_p$ specifies neither how the information is to be encoded on~$\vc$ nor the point in time when the different communication occurs. 
However, $\psi_p$ requires that the communication differs eventually if the initial values of $\vin$ are different. 
Moreover, notice that both~$\Delta_p$ and $\psi_p$ are determined by~$p$'s specification~$\varphi_p$.
The following theorem shows that the information flow assumption $\psi_p$ is a necessary condition.

\begin{theorem}\label{th:necessary:information:flow:assumption}
Every implementation that satisfies the local specification~$\varphi_p$  for~$p$ also satisfies the information flow assumption~$\psi_p$. 
\end{theorem}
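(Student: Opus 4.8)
The plan is to show that the trace set of any correct implementation lies in the 2-hyperproperty $\psi_p$, i.e., that every pair of its traces belongs to the relation $R$ of \Cref{def:necessary:information:flow:guarantee}. So I would fix an implementation $(s_p,s_q)$ with $\mathit{Traces}(s_p,s_q) \subseteq \varphi_p$ and pick two arbitrary traces $\pi,\pi' \in \mathit{Traces}(s_p,s_q)$. By the definition of $R$ it suffices to assume $(\pi{\downarrow_{O_e}}, \pi'{\downarrow_{O_e}}) \in \Delta_p$ and derive $\pi{\downarrow_{I_p}} \neq \pi'{\downarrow_{I_p}}$. I would argue by contradiction: assume $\pi{\downarrow_{I_p}} = \pi'{\downarrow_{I_p}}$ and contradict the defining property of $\Delta_p$.

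The crux is a \emph{determinacy lemma}: in any trace produced by $(s_p,s_q)$, the projection onto $O_p$ is a function of the projection onto $I_p$; concretely, $\pi{\downarrow_{I_p}} = \pi'{\downarrow_{I_p}}$ implies $\pi{\downarrow_{O_p}} = \pi'{\downarrow_{O_p}}$. This is where the real work sits: one has to unfold the composition $s_p \| s_q$ together with the input maps $f_p,f_q$ and observe that the $I_p$-entries of a trace are exactly the input history that $s_p$ consumes. Since a strategy for $p$ is a function of its own input history alone, $s_p$ emits identical $O_p$-valuations at every position in the two traces. I expect this unwinding --- tracking the one-step delay in the definition of $s$ and confirming that $f_p(i_0\cdots i_{k-1})$ coincides with the relevant prefix of $\pi{\downarrow_{I_p}}$ --- to be the main obstacle, although it is conceptually routine: it is precisely the statement that $p$ cannot react to information it does not receive.

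Granting the lemma, I would set $\pi_p := \pi{\downarrow_{O_p}} = \pi'{\downarrow_{O_p}}$ and write $\pi_e := \pi{\downarrow_{O_e}}$ and $\pi_e' := \pi'{\downarrow_{O_e}}$. Because $\varphi_p$ mentions only variables in $O_p \cup O_e$, its truth on a trace depends solely on that trace's projection onto $O_p \cup O_e$; hence from $\pi \models \varphi_p$ and $\pi' \models \varphi_p$ I obtain $\pi_e \sqcup \pi_p \models \varphi_p$ and $\pi_e' \sqcup \pi_p \models \varphi_p$. Now I instantiate \Cref{def:distinguishability:relation} at the witness $\pi_p$: since $(\pi_e,\pi_e') \in \Delta_p$ and $\pi_e \sqcup \pi_p \models \varphi_p$, the definition forces $\pi_e' \sqcup \pi_p \nmodels \varphi_p$, contradicting the second satisfaction. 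Therefore $\pi{\downarrow_{I_p}} \neq \pi'{\downarrow_{I_p}}$, so $(\pi,\pi') \in R$; as the pair was arbitrary, $\mathit{Traces}(s_p,s_q) \in \psi_p$, which is exactly the claim.
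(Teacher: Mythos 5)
Your proposal is correct and follows essentially the same route as the paper's proof: argue by contradiction on a pair with equal $I_p$-projections, use determinism of $s_p$ (your ``determinacy lemma'' is exactly the paper's one-sentence step that identical inputs yield identical outputs) to obtain a common output trace $\pi_p$, and then contradict the defining property of $\Delta_p$ via $\pi_e \sqcup \pi_p \models \varphi_p$ and $\pi_e' \sqcup \pi_p \models \varphi_p$. The only difference is presentational: you isolate as explicit steps the unwinding of $f_p$ and the fact that $\varphi_p$ depends only on the $O_p \cup O_e$ projection, both of which the paper's proof uses implicitly.
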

\begin{proof}
Assume that there exists an implementation $(s_p, s_q)$ that satisfies $\varphi_p$ but not $\psi_p$. We show that this leads to a contradiction.
Since $\psi_p$ is not satisfied, there exists a pair of traces $\pi, \pi'$ such that $(\pi {\downarrow_{O_e}}, \pi' {\downarrow_{O_e}}) \in  \Delta_p$ and
$\pi {\downarrow_{I_p}} = \pi' {\downarrow_{I_p}}$. 
Let $\pi_e= \pi {\downarrow_{O_e}}$, and $\pi_e'= \pi' {\downarrow_{O_e}}$.
Since the inputs to process $p$ are the same on $\pi$ and $\pi'$, and since the strategies $s_p$ and $s_q$ are deterministic, the sequence of outputs is also the same.
Let $x_0x_1x_2\ldots = \pi {\downarrow_{I_p}} = \pi' {\downarrow_{I_p}}$ be the sequence of inputs.
We construct the sequence of outputs $o_0o_1o_2 \ldots$ generated by the implementation as follows: $o_k = s_p(x_0x_1 \ldots x_{k-1})$ for all $k\in \mathbb N$. Given that the implementation satisfies $\varphi_p$, we have that 
both $\pi_e \sqcup o$ and $\pi_e' \sqcup o$ satisfy $\varphi_p$. This, however, contradicts the assumption that $(\pi {\downarrow_{O_e}}, \pi' {\downarrow_{O_e}}) \in  \Delta_p$.
\end{proof}

\subsection{Time-bounded Information Flow}
\label{sec:Time:Bounded:Information:Flow}
We now introduce a strengthened version of the information flow assumption. As shown in Theorem~\ref{th:necessary:information:flow:assumption}, the information flow assumption is a necessary condition for the existence of an implementation that satisfies the specification. Often, however, the information flow assumption is not strong enough to allow for the separate synthesis of individual components in a compositional approach. 

Consider again process $b$ in our motivating example. The information flow assumption guarantees that any pair of traces that differ in the initial value of the global input $\vin$ will differ at some point in the value of the channel $\vc$. This assumption is not strong enough to allow process $b$ to satisfy the specification that $b$ must eventually set $\vout$ to $\mathit{true}$ iff the initial value of $\vin$ is $\mathit{true}$. Suppose that $\vin$ is $\mathit{true}$ initially. Then $b$ must at some point set $\vout$ to $\mathit{true}$. Process $b$ can only do so when it \emph{knows} that the initial value of $\vin$ is $\mathit{true}$. The information flow assumption is, however, too weak to guarantee that process $b$ will eventually obtain this knowledge. 
To see this, consider a hypothetical behavior of process $a$ that sets $\vc$ forever to $\mathit{true}$, if $\vin$ is $\mathit{true}$ in the first position, and if~$\vin$ is $\mathit{true}$ then $a$ keeps~$\vc$ true for~$n-1$ steps,  where $n>0$ is some fixed natural number, before it sets $\vin$ to $\mathit{false}$ at the $n^{\mathrm{th}}$ step. This behavior of process $a$ satisfies the information flow assumption for any number $n$; however, without knowing $n$, process $b$ does not know how many steps it should wait for $\vin$ to become $\mathit{false}$.
If, at any point in time $t$, the channel~$\vc$ has not yet been set to $\mathit{false}$, process $b$ can never rule out the possibility that the initial value of $\vin$ is $\mathit{true}$; it might simply be the case that $t<n$ and, hence, the time when $\vc$ will be set to $\mathit{false}$ still lies in the future of $t$! Hence, process $b$ can never actually set $\vout$ to $\mathit{true}$.

We begin by presenting a finer version of the distinguishability relation from Definition~\ref{def:distinguishability:relation} that we call \emph{time-bounded distinguishability}. Recall that by Definition~\ref{def:distinguishability:relation}, a pair  $(\pi_e, \pi_e')$ is in the  distinguishability relation~$\Delta_p$  if every output sequence~$\pi_p$ for~$p$ violates $p$'s specification~$\varphi_p$  when combined with at least one of the input sequences~$\pi_e$ or~$\pi_e'$.
Equivalently, if~$\varphi_p$ is satisfied by~$\pi_p$ combined with~$\pi_e$, then it is violated when~$\pi_p$ is combined with~$\pi_e'$. 
Observe that for~$p$ to behave differently in two scenarios, a difference must occur at a finite time~$t$. Clearly, this will only happen if $p$'s input shows a difference in finite time. To capture this, we say that a pair $(\pi_e,\pi_e')$ of environment output sequences is in the \emph{time-bounded} distinguishability relation if the violation with~$\pi_e'$ is guaranteed to happen in finite time. In order to avoid this violation, process~$p$ must act in finite time, before the violation occurs on~$\pi_e'$. 
We say that a trace $\pi$ \emph{finitely violates} an LTL formula $\varphi$, denoted by $\pi \nvDash_f \varphi$, if there exists a finite prefix $w$ of $\pi$ such that every (infinite) trace extending~$w$ violates $\varphi$.

\begin{definition}[Time-bounded distinguishability]\label{def:distinguishability:relation:2}
Given a local specification~$\varphi_p$ for process~$p$, the \emph{time-bounded distinguishability relation} $\Lambda_p$ is the set of pairs $(\pi_e,\pi_e')\in (2^{O_e})^\omega\times (2^{O_e})^\omega$ of traces of 
global inputs such that 
every trace of local outputs $\pi_p\in O_p$  either violates the specification $\varphi_p$ when combined with~$\pi_e$, or 
finitely violates~$p$'s local specification $\varphi_p$ when combined with $\pi_e'$:
\begin{align*}
		\Lambda_{p} = \{ (\pi_e, \pi_e') &\in (2^{O_e})^\omega \times (2^{O_e})^\omega \mid\\ 
		\forall \pi_p &\in (2^{O_p})^\omega\ldot~ \mbox{if~ } \pi_e \sqcup \pi_p\,  \vDash \varphi_p \mbox{ ~then~ } \pi_e' \sqcup \pi_p\, \nvDash_f \varphi_p\ \}
\end{align*}
\end{definition}
Note that, unlike the distinguishability relation $\Delta_p$, the \emph{time-bounded} distinguishability relation $\Lambda_p$ is not symmetric: For $(\pi_e, \pi_e')$, the trace $\pi_e' \sqcup \pi_p$ has to finitely violate~$\varphi_p$, while the trace $\pi_e \sqcup \pi_p$ only needs to violate~$\varphi_p$ in the infinite evaluation.
As a result, the corresponding \emph{time-bounded} information flow assumption will also be asymmetric: we require that on input $\pi_e$, process $p$ eventually obtains the knowledge that the input is different from~$\pi_e'$. For input $\pi_e'$ we do not pose such a requirement. 
The intuition behind this definition is that on environment output $\pi'_e$,  process~$p$ must definitely produce some output that does \emph{not} finitely violate $\varphi_p$. This output can safely be produced without ever knowing that the input is $\pi'_e$. However, on input $\pi_e$, it becomes necessary for process~$p$ to eventually deviate from the output that would work for $\pi_e'$. In order to safely do so, $p$ needs to realize after some finite time that the input is not $\pi_e'$. In our running example, $\pi_e$ would be an input in which $\vin$ is initially $\mathit{true}$, while $\pi_e'$ will be one in which it starts out being $\mathit{false}$.

Suppose we have a function $t: (2^{O_e})^\omega \rightarrow \mathbb N$ that identifies, for each environment output $\pi_e$, the time $t(\pi_e)$ by which process $p$ is guaranteed to know that the environment output is not $\pi_e'$. We define the information flow assumption for this particular function $t$ as a 2-hyperproperty. Since we do not know $t$ in advance, the time-bounded information flow assumption is the (infinite) union of all 2-hyperproperties corresponding to the different possible functions $t$.

\begin{definition}[Time-bounded information flow assumption]\label{def:time:bounded:information:flow:assumption}
Given the time-bounded distinguishability relation~$\Lambda_p$ for process~$p$, the \emph{time-bounded information flow assumption} $\chi_p$ for~$p$ 
is the (infinite) union over the 2-hyperproperties induced by the following relations~$R_t$,
for all possible functions $t: (2^{O_e})^\omega \rightarrow \mathbb N$: 
\begin{align*}
R_t = \{ (\pi&, \pi') \in (2^{\variables})^\omega \times (2^{\variables})^\omega \mid\\ 
& \mbox{ if } (\pi {\downarrow_{O_e}}, \pi' {\downarrow_{O_e}}) \in \Lambda_p, \mbox{ then } \pi[0..t(\pi{\downarrow_{O_e}})]{\downarrow_{I_p}} \neq \pi'[0..t(\pi{\downarrow_{O_e}})] {\downarrow_{I_p}} \}
\end{align*}
\end{definition}

Unlike the information flow assumption (cf.  Theorem~\ref{th:necessary:information:flow:assumption}), the \emph{time-bounded} information flow assumption is not in general a necessary assumption. 
Consider a modification of our motivating example, where there is an
additional environment output {\tt start}, which is only visible to process $a$, not to process $b$. The previous specification $\varphi_b$ is modified so that if {\tt in} is $\mathit{true}$ initially, then {\tt out} must be $\mathit{true}$ two steps after {\tt start} becomes $\mathit{true}$ for the first time; if {\tt in} is $\mathit{false}$ initially, then {\tt out} must become $\mathit{false}$ after two positions have passed since the first time {\tt start} has become $\mathit{true}$. The specification $\varphi_a$ ensures that the channel {\tt c} is set to $\mathit{true}$ until {\tt start} becomes $\mathit{true}$.
Clearly, this is realizable: if {\tt in} is $\mathit{false}$ initially, process $a$ sets {\tt c} to $\mathit{false}$ once {\tt start} becomes $\mathit{true}$, otherwise {\tt c} stays $\mathit{true}$ forever. Process $b$ starts by setting {\tt out} to $\mathit{true}$. It then waits for {\tt c} to become $\mathit{false}$, and, if and when that happens, sets {\tt out} to $\mathit{false}$. In this way, process $b$ accomplishes the correct reaction within two steps after {\tt start} has occurred. However, the function $t$ required by the time-bounded information flow assumption does not exist, because the time of the communication depends on the environment: the prefix needed to  distinguish an environment output $\pi_e$, where {\tt in} is $\mathit{true}$ initially from an environment output $\pi_e'$, where {\tt in} is $\mathit{false}$ initially, depends on the time when {\tt start} becomes $\mathit{true}$ on $\pi_e'$.

We now characterize a set of situations in which the time-bounded information flow requirement is still a necessary requirement. For this purpose we consider time-bounded distinguishability relations where the safety violation occurs after a bounded number of steps. We call such 
time-bounded distinguishability relations \emph{uniform}; the formal definition follows below.

\begin{definition}[Uniform distinguishability]
A time-bounded distinguishability relation $\Lambda_p$ is \emph{uniform}
if for every trace $\pi_e \in (2^{O_e})^\omega$ of global inputs, and every trace $\pi_p \in (2^{O_p})^\omega$ of local outputs of~$p$, there exists a natural number $n \in N$ such that for all $\pi_e' \in (2^{O_e})^\omega$ s.t.  $(\pi_e,\pi_e') \in \Lambda_p 
\mbox{~if~} \pi_e \sqcup \pi_p  \vDash \varphi_p \mbox{~then~} \pi_e' \sqcup \pi_p \nvDash_n \varphi_p$.
\end{definition}
\begin{theorem}\label{th:necessary:knowledge:assumption}
Let $\Lambda_p$ be a uniform time-bounded distinguishability relation derived from process~$p$'s local specification $\varphi_p$. Every computation tree that satisfies $\varphi_p$ also satisfies the time-bounded information flow assumption $\chi_p$. 
\end{theorem}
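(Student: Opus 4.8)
The plan is to unfold the definition of $\chi_p$ and exhibit an explicit witness function $t$. Recall that $\chi_p = \bigcup_t H_t$, where $H_t$ is the $2$-hyperproperty induced by $R_t$, so a computation tree, viewed as its trace set $T = \mathit{Traces}(s_p,s_q)$, satisfies $\chi_p$ exactly when there \emph{exists} a single function $t : (2^{O_e})^\omega \rightarrow \mathbb{N}$ with $(\pi,\pi') \in R_t$ for every pair $\pi,\pi' \in T$. Thus the whole argument reduces to constructing one such $t$ and verifying the membership condition on all pairs.

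For the construction of $t$, fix $\pi_e \in (2^{O_e})^\omega$. Since the implementation is deterministic, $\pi_e$ determines a unique trace $\pi \in T$ with $\pi{\downarrow_{O_e}} = \pi_e$; write $\pi_p = \pi{\downarrow_{O_p}}$. Because every trace of $T$ satisfies $\varphi_p$ and $\varphi_p$ refers only to $O_e \cup O_p$, we have $\pi_e \sqcup \pi_p \models \varphi_p$. I then apply uniformity of $\Lambda_p$ to this $\pi_e$ and $\pi_p$: it yields a bound $n \in \mathbb{N}$ such that $\pi_e' \sqcup \pi_p \nvDash_n \varphi_p$ for \emph{every} $\pi_e'$ with $(\pi_e,\pi_e') \in \Lambda_p$. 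I set $t(\pi_e) := n$. This is well-defined, since $\pi_p$ depends only on $\pi_e$, and it is defined on all of $(2^{O_e})^\omega$ because $T$ contains a trace for every environment sequence.

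To verify $R_t$, take any $\pi,\pi' \in T$ and suppose $(\pi{\downarrow_{O_e}},\pi'{\downarrow_{O_e}}) = (\pi_e,\pi_e') \in \Lambda_p$; I must show their $I_p$-projections differ within the first $t(\pi_e)=n$ positions. Assume for contradiction that $\pi[0..n]{\downarrow_{I_p}} = \pi'[0..n]{\downarrow_{I_p}}$. Since $s_p$ reads only $I_p$ and is deterministic, equal input prefixes force $p$ to produce equal output prefixes, so $\pi_p' = \pi'{\downarrow_{O_p}}$ agrees with $\pi_p$ on a prefix long enough to cover the first $n$ positions. Hence the length-$n$ prefix of the actual trace $\pi' = \pi_e' \sqcup \pi_p'$ coincides with the length-$n$ prefix of $\pi_e' \sqcup \pi_p$. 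But by construction $\pi_e' \sqcup \pi_p \nvDash_n \varphi_p$, i.e., every infinite trace extending that length-$n$ prefix violates $\varphi_p$; in particular $\pi'$ violates $\varphi_p$, contradicting $\pi' \in T$. Therefore the inputs differ within $t(\pi_e)$ positions, so $(\pi,\pi') \in R_t$, and $T \in H_t \subseteq \chi_p$.

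The main obstacle, and the precise place where uniformity is essential, is making $t$ a function of $\pi_e$ \emph{alone}. The violating trace named in the definition of $\Lambda_p$ is $\pi_e' \sqcup \pi_p$, built from $p$'s output on the \emph{other} scenario $\pi_e$ rather than on $\pi_e'$; bridging from this to a genuine violation of the actual trace $\pi'$ requires the determinism-of-$s_p$ argument together with a careful off-by-one check relating input-prefix agreement to output-prefix agreement (which the inclusive index range $[0..n]$ comfortably absorbs). Plain time-boundedness would only supply, for each $\pi_e'$ separately, some finite violation time depending on $\pi_e'$, and no single $t(\pi_e)$ could dominate all of them — exactly the failure illustrated by the \texttt{start} counterexample preceding the definition. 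Uniformity removes this dependence by providing one bound $n$ valid for all admissible $\pi_e'$ simultaneously, which is what lets the construction go through.
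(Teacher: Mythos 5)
Your proof is correct and follows essentially the same route as the paper's: define $t(\pi_e)$ from the uniformity bound for the implementation's own output on $\pi_e$, then derive a contradiction from agreeing $I_p$-prefixes via determinism of $s_p$ and the finite ($\nvDash_n$) violation. In fact your writeup is cleaner than the paper's, which swaps $\pi_e$ and $\pi_e'$ (and $\pi_p$, $\pi_p'$) inconsistently, whereas you correctly anchor $t$ to the left trace as the definition of $R_t$ requires.
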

\begin{proof}
Let $(s_a, s_b)$ be an implementation that satisfies $\varphi_p$. We show that the time-bounded information-flow assumption $\chi_p$ is satisfied by defining a function $t: (2^{O_e})^\omega \rightarrow \mathbb N$ such that the 2-hyperproperty given by $R_t$ is satisfied. To compute $t(\pi_e')$ for some trace of inputs $\pi_e' \in (2^{O_e})^\omega$, we consider the trace of outputs $\pi_p' \in (2^{O_p})^\omega$ obtained by applying the implementation to the prefixes of $\pi_e'$. Since $\Lambda_p$ is uniform, there is a natural number $n \in \mathbb N$ such that for all $\pi_e$ with $(\pi_e,\pi_e') \in \Lambda_p$, we have that $\pi_e' \sqcup \pi_p \nvDash_n \varphi_p$. We set $t(\pi_e)$ to $n$. 

To convince yourself that $\chi_p$ is satisfied, suppose, by way of contradiction, that $R_t$ is violated on some pair $(\pi_e,\pi_e') \in \Lambda_p$ of input traces, i.e., the projection on $I_p$ is the same for $\pi_e$ and $\pi_e'$ on the entire prefix of length $t(\pi_e')$. But then, also the output of process $p$ must be the same along the entire prefix; this, however, means that $\pi_e'$ will violate $\varphi_p$ after $n=t(\pi_e')$ steps, contradicting our assumption that the implementation satisfies $\varphi_p$.
\end{proof}

\section{Computing Information Flow Assumptions}\label{app:computinginformationflowassumptions}
\subsection{Automata for Information Flow Assumptions}
We first give an explicit construction of an automaton that recognizes the information flow assumption~$\psi_p$ that is induced by~$\varphi_p$.
The local specification $\varphi_p$ is given as an LTL formula, which can be translated into an 
equivalent B\"uchi automaton $\A_{\varphi}$ over alphabet $2^{O_e \cup O_p}$ \cite{ReasoningAboutInfiniteComputationPaths}.
We self-compose $\A_{\varphi}$ into an automaton ${\mathcal B}$ over the alphabet $2^{O_e \cup O_p} \times 2^{O_e \cup O_p}$ such that  ${\mathcal B}$ accepts a sequence of pairs iff both the projection on the first components and the projection on the second components are accepted by $\A_\varphi$ and, additionally, both components always agree on the values of $O_p$. We then construct a B\"uchi automaton $\mathcal C$ over the alphabet $2^{O_e} \times 2^{O_e}$ that guesses the values of $O_p$ nondeterministically so that a pair of sequences is accepted by $\mathcal C$ iff there exists a valuation of $O_p$ such that the extended sequences are accepted by $\mathcal B$.
The automaton $\mathcal C$ thus accepts all sequences of global inputs that process $p$ does \emph{not} need to distinguish, because there is a sequence of outputs that satisfies the specification in both cases. 
We construct another B\"uchi automaton $\mathcal D$ over alphabet $2^{I_p} \times 2^{I_p}$ that recognizes a sequence of pairs of local input values iff they differ at some point. Finally, we construct a B\"uchi automaton~$\mathcal E$ over the alphabet $2^{O_e \cup I_p} \times 2^{O_e \cup I_p}$ that accepts a sequence of pairs iff the sequence of projections on $O_e$ is accepted by $\mathcal C$ or the sequence of projections on $I_p$ is accepted by $\mathcal D$.
The automaton $\mathcal E$ recognizes the information flow assumption~$\psi_p$ of process~$p$.

\begin{theorem} \label{thm:automatonforifa}
For a process $p$ with local specification $\varphi_p$, there exists a B\"uchi automaton with an exponential number of states in the length of $\varphi_p$ that recognizes the information flow assumption~$\psi_p$ induced by~$\varphi_p$.
\end{theorem}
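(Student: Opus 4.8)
The plan is to verify that the automaton chain $\A_\varphi \to \mathcal B \to \mathcal C \to \mathcal D \to \mathcal E$ constructed above indeed recognizes $\psi_p$, and then to bound the number of states introduced at each step. The whole argument decomposes along the logical structure of the relation $R$ in Definition~\ref{def:necessary:information:flow:guarantee}: a pair $(\pi,\pi')$ lies in $R$ exactly when either $(\pi{\downarrow_{O_e}}, \pi'{\downarrow_{O_e}}) \notin \Delta_p$ or $\pi{\downarrow_{I_p}} \neq \pi'{\downarrow_{I_p}}$, since an implication $A \to B$ is the disjunction $\neg A \vee B$. I will show that $\mathcal C$ recognizes the first disjunct (the complement of $\Delta_p$, read off the $O_e$-components) and $\mathcal D$ the second (a difference on the $I_p$-components), so that their union $\mathcal E$ is precisely $\psi_p$.

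First I would invoke the standard translation of LTL to Büchi automata~\cite{ReasoningAboutInfiniteComputationPaths}, which yields $\A_\varphi$ with $2^{O(|\varphi_p|)}$ states. The self-composition $\mathcal B$ is the product of two copies of $\A_\varphi$ equipped with a transition guard forcing the two letters to agree on $O_p$; demanding an accepting run in both copies simultaneously is a conjunction of two Büchi conditions, handled by the usual degeneralization with a constant-size counter. Thus $\mathcal B$ accepts $(\sigma,\sigma')$ iff $\sigma$ and $\sigma'$ share a common $O_p$-valuation $\pi_p$ and both satisfy $\varphi_p$; its size is $O(|\A_\varphi|^2)$, hence still exponential in $|\varphi_p|$.

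The crucial step is the projection to $\mathcal C$. Guessing the $O_p$-letters nondeterministically realizes existential projection, under which Büchi automata are closed without a state blow-up. Hence $\mathcal C$ accepts $(\pi_e,\pi_e')$ over $2^{O_e}\times 2^{O_e}$ iff there exists $\pi_p$ with both $\pi_e \sqcup \pi_p \models \varphi_p$ and $\pi_e' \sqcup \pi_p \models \varphi_p$. This existential statement is the exact negation of the universally quantified condition defining $\Delta_p$, so $\Lang(\mathcal C)$ is the complement of $\Delta_p$. This quantifier matching — that nondeterministic projection delivers the existential witness negating the $\forall \pi_p$ in $\Delta_p$ — is the point I expect to require the most care, since it is where the semantics of $\Delta_p$ must be lined up precisely against the automata operations. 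The automaton $\mathcal D$, recognizing ``the two $I_p$-sequences differ at some position'', needs only a constant number of states: scan for the first disagreement, then loop in an accepting sink.

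Finally, $\mathcal E$ is the union of $\mathcal C$ (read on the $O_e$-projections) and $\mathcal D$ (read on the $I_p$-projections), lifted to the common alphabet $2^{O_e \cup I_p} \times 2^{O_e \cup I_p}$; the union of Büchi automata costs only the sum of the two state counts. Because the defining relation $R$ depends on a pair $(\pi,\pi')$ only through its projections onto $O_e$ and $I_p$, this alphabet is exactly the right carrier for $\psi_p$, and by the disjunctive reformulation of $R$ above we obtain $\Lang(\mathcal E) = \psi_p$. Since every step beyond $\A_\varphi$ adds at most a polynomial factor or a constant summand, the resulting automaton has $2^{O(|\varphi_p|)}$ states, as claimed.
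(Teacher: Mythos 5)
Your proposal is correct and takes essentially the same route as the paper: the paper's proof likewise just verifies that the chain $\mathcal A_\varphi \rightarrow \mathcal B \rightarrow \mathcal C \rightarrow \mathcal D \rightarrow \mathcal E$ recognizes $\psi_p$ and tracks the state counts (exponential for $\mathcal A_\varphi$, quadratic product for $\mathcal B$, no blow-up for the projection $\mathcal C$, two states for $\mathcal D$, sum for the union $\mathcal E$). Your explicit quantifier-matching step --- that nondeterministic projection yields the existential witness negating the $\forall \pi_p$ in $\Delta_p$, so $\mathcal C$ recognizes exactly the complement of $\Delta_p$ --- is a correct elaboration of a point the paper's terser proof leaves implicit.
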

\begin{proof}
The automaton $\mathcal E$ described above recognizes~$\psi_p$. We now claim that it has the stated size. The number of states of $\A_{\varphi}$ is exponential in the length of $\varphi_p$. By construction, the number of states of $\mathcal B$ is quadratic in the number of states of $A_{\varphi}$, and  $\mathcal C$ has the same number of states as $\mathcal B$. The automaton~$\mathcal D$ needs only two states. Hence, $\mathcal E$ has only two more states than $\mathcal B$ and so its total number  of states is  exponential in the length of $\varphi_p$, as claimed. \end{proof}

\subsection{Checking Uniformity}

We begin with the construction of an automaton $\mathcal A_{\Lambda_p}$ over alphabet $2^{O_e} \times 2^{O_e}$ that recognizes the time-bounded distinguishability relation $\Lambda_p$. 
Let $\mathcal A_{\neg \varphi_p}$ be a deterministic $\omega$-automaton over alphabet $2^{O_e \cup O_p}$ that recognizes all traces that violate the local specification $\varphi_p$. Let $\mathcal B_{\neg \varphi_p}$ be a deterministic finite-word automaton over alphabet $2^{O_e \cup O_p}$ that recognizes the bad prefixes of $\varphi_p$. We combine $\mathcal A_{\neg \varphi_p}$ and $\mathcal B_{\neg \varphi_p}$ into a deterministic $\omega$-automaton $\mathcal C$ over alphabet $2^{O_e} \times 2^{O_e} \times 2^{O_p}$ that accepts traces of two inputs $\pi_e,\pi_e'$ and an output $\pi_p$ such that $\pi_e \sqcup \pi_p$ violates $\varphi_p$ or $\pi_e' \sqcup \pi_e$ \emph{finitely} violates $\varphi_p$. We obtain the universal automaton $\mathcal D_{\Lambda_p}$ with alphabet $2^{O_e} \times 2^{O_e}$ as the universal projection of $\mathcal C$ with respect to the outputs $\pi_p$.

\begin{theorem}\label{thm:timeboundeddistinguishability}
    For  a  process $p$ with  local  specification $\varphi_p$, there exists a universal $\omega$-automaton $\mathcal A_{\Lambda_p}$ over alphabet $2^{O_e} \times 2^{O_e}$ that recognizes the time-bounded distinguishability relation $\Lambda_p$. The number of states of $\mathcal A_{\Lambda_p}$ is doubly-exponential in the length of $\varphi_p$.
\end{theorem}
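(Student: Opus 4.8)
The plan is to discharge the two obligations implicit in the statement: correctness, i.e. that the automaton $\mathcal D_{\Lambda_p}$ constructed above recognizes $\Lambda_p$ when read as a universal automaton ($\Lang(\mathcal D_{\Lambda_p}) = \Lambda_p$), and the size bound, i.e. that it has doubly-exponentially many states in $|\varphi_p|$. I would simply set $\mathcal A_{\Lambda_p} := \mathcal D_{\Lambda_p}$ and verify these two properties for the construction already given.

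For correctness I would first rewrite the membership condition of $\Lambda_p$ in its equivalent disjunctive form: $(\pi_e,\pi_e') \in \Lambda_p$ iff for every $\pi_p \in (2^{O_p})^\omega$, either $\pi_e \sqcup \pi_p \nvDash \varphi_p$ or $\pi_e' \sqcup \pi_p \nvDash_f \varphi_p$ (this is just the contrapositive reading of the implication in Definition~\ref{def:distinguishability:relation:2}). The inner disjunction is exactly the acceptance condition of the deterministic $\omega$-automaton $\mathcal C$: running $\mathcal A_{\neg\varphi_p}$ on the $(O_e,O_p)$-projection formed from the first input component checks the left disjunct, while running $\mathcal B_{\neg\varphi_p}$ on the $(O_e,O_p)$-projection formed from the second input component and declaring acceptance once a bad prefix has been seen checks the right disjunct. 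Because bad prefixes are extension-closed, the ``a bad prefix has occurred'' event can be recorded in a persistent (sink) accepting state, so this finite-violation test lifts to a genuine $\omega$-acceptance condition, and the disjunction of the two deterministic conditions is again recognizable by a single deterministic $\omega$-automaton, namely $\mathcal C$. I would then invoke the semantics of universal automata: the universal projection of $\mathcal C$ onto $O_p$ accepts $(\pi_e,\pi_e')$ iff \emph{every} completion by an output $\pi_p$ is accepted by $\mathcal C$, which is precisely the universally quantified condition defining $\Lambda_p$. Hence $\Lang(\mathcal D_{\Lambda_p}) = \Lambda_p$.

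For the size bound I would track the components in turn. Translating $\neg\varphi_p$ to a nondeterministic B\"uchi automaton and then determinizing yields $\mathcal A_{\neg\varphi_p}$ with a doubly-exponential number of states in $|\varphi_p|$, and the deterministic bad-prefix automaton $\mathcal B_{\neg\varphi_p}$ is likewise doubly-exponential in the worst case. Their product $\mathcal C$ multiplies these two counts, and a product of two doubly-exponential quantities is again doubly-exponential. Universal projection onto $O_p$ only shrinks the alphabet (from $2^{O_e}\times 2^{O_e}\times 2^{O_p}$ to $2^{O_e}\times 2^{O_e}$) and reinterprets the branching universally; it adds no states. Thus $\mathcal A_{\Lambda_p}$ inherits the doubly-exponential bound of $\mathcal C$, as claimed.

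The step I expect to need the most care is the bad-prefix component and its contribution to the complexity. I would have to argue faithfully that lifting the deterministic bad-prefix DFA to infinite words (by making the accepting states persistent) accepts an $\omega$-word iff it has a bad prefix, which is exactly the predicate $\nvDash_f$. The jump from exponential to \emph{doubly}-exponential rests on the known worst-case lower bounds for determinization of $\omega$-automata and for bad-prefix automata of LTL, which I would cite rather than reprove. A secondary bookkeeping subtlety is that both copies of the component automata must read the \emph{shared} $O_p$ component together with their respective (distinct) $O_e$ components; I would make the alphabet coordinates explicit so that $\mathcal C$ synchronizes the single guessed output $\pi_p$ across the two violation tests.
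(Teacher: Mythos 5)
Your proposal takes essentially the same route as the paper: it instantiates $\mathcal A_{\Lambda_p}$ as the universal projection $\mathcal D_{\Lambda_p}$ of the product of the deterministic violation automaton $\mathcal A_{\neg\varphi_p}$ and the deterministic bad-prefix automaton $\mathcal B_{\neg\varphi_p}$, and obtains the doubly-exponential bound from the known sizes of these components (the paper cites the same Kupferman--Vardi bounds), with the projection adding no states. Your write-up is in fact somewhat more careful than the paper's proof, which addresses only the size bound: you additionally verify correctness (the disjunctive rewriting of $\Lambda_p$, the persistent-sink lifting of the bad-prefix DFA to an $\omega$-condition, and the explicit synchronization of the shared $\pi_p$ coordinate, which also silently repairs the paper's typo ``$\pi_e' \sqcup \pi_e$'' for $\pi_e' \sqcup \pi_p$).
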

\begin{proof}
Both $\mathcal A_{\neg \varphi_p}$ and $\mathcal B_{\neg \varphi_p}$ have doubly-exponentially many states in the length of $\varphi_p$~\cite{Kupferman+Vardi/99/Safety}. The size of $\mathcal C$ is the product of the sizes of $\mathcal A_{\neg \varphi_p}$ and $\mathcal B_{\neg \varphi_p}$. Because of the universal projection, $\mathcal D_{\Lambda_p}$ is universal, rather than deterministic, but still of doubly-exponential size.
\end{proof}

Next, we check whether the time-bounded distinguishability relation is uniform. We construct an automaton that recognizes all traces of inputs and local outputs where no 
uniform bound exists. Let $\mathcal A_{\varphi_p}$ be a universal $\omega$-automaton over alphabet $2^{O_e \cup O_p}$ that recognizes all traces that satisfy the local specification $\varphi_p$. We combine $\mathcal A_\varphi$ with 
$\mathcal D_{\Lambda_p}$ to a universal $\omega$-automaton $\mathcal E$ over alphabet  $2^{O_e} \times 2^{O_e} \times 2^{O_p}$ that accepts traces of two inputs $\pi_e,\pi_e'$ and an output $\pi_p$ when $(\pi_e,\pi_e') \in \Lambda_p$ and $\pi_e \sqcup \pi_p \models \varphi_p$. 
From $\mathcal E$ we construct a universal automaton $\mathcal F$ over alphabet $2^{f} \times 2^{O_e} \times 2^{O_p}$ that accepts $\pi_e$ and $\pi_p$ if there exists an $\pi_e'$ such that the bad prefix is reached on $\pi_e'$ after $f$ becomes $\mathit{true}$ for the first time.
Finally, we obtain a universal automaton $\mathcal G$ over alphabet $2^{O_e} \times 2^{O_p}$ that accepts those $\pi_e$ and $\pi_p$ that are accepted by $\mathcal F$ for all traces of $f$ that set $f$ to $\mathit{true}$ at least once.
\begin{theorem}
    For  a  process $p$ with  local  specification $\varphi_p$, whether
    the time-bounded distinguishability relation is uniform can be checked in quadruply exponential running time.
\end{theorem}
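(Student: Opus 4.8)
The plan is to reduce the uniformity question to an emptiness test for the universal automaton $\mathcal G$ constructed above, and then to account for the cost of building $\mathcal G$ and of deciding its emptiness. First I would establish the correctness claim that $\Lambda_p$ is uniform if and only if $\mathcal L(\mathcal G)=\emptyset$. To this end, negate the definition of uniform distinguishability. Non-uniformity means that there exist $\pi_e\in(2^{O_e})^\omega$ and $\pi_p\in(2^{O_p})^\omega$ with $\pi_e\sqcup\pi_p\models\varphi_p$ such that for every bound $n\in\mathbb N$ there is a trace $\pi_e'$ with $(\pi_e,\pi_e')\in\Lambda_p$ whose combination $\pi_e'\sqcup\pi_p$ does \emph{not} finitely violate $\varphi_p$ within $n$ steps, i.e.\ whose bad prefix occurs only after position $n$. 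Such a bad prefix always exists, since $(\pi_e,\pi_e')\in\Lambda_p$ together with $\pi_e\sqcup\pi_p\models\varphi_p$ forces $\pi_e'\sqcup\pi_p$ to finitely violate $\varphi_p$; the content of non-uniformity is that the positions at which these bad prefixes occur are unbounded. The fresh marker $f$ is precisely what lets the automaton range over all bounds $n$: reading the first position at which $f$ holds as the bound, $\mathcal F$ captures, for a fixed $(\pi_e,\pi_p)$ and a fixed $f$-trace, the existence of a witness $\pi_e'$ whose bad prefix lands after that position, and $\mathcal G$ quantifies universally over all $f$-traces that raise $f$ at least once. Hence $\mathcal G$ accepts $(\pi_e,\pi_p)$ exactly when $(\pi_e,\pi_p)$ witnesses non-uniformity, so uniformity is equivalent to $\mathcal L(\mathcal G)=\emptyset$.

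Next I would track the cost. By Theorem~\ref{thm:timeboundeddistinguishability}, the automata $\mathcal A_{\neg\varphi_p}$, $\mathcal B_{\neg\varphi_p}$, $\mathcal C$ and the universal automaton $\mathcal D_{\Lambda_p}$ are all of doubly-exponential size in $|\varphi_p|$; the universal automaton $\mathcal A_{\varphi_p}$ for $\varphi_p$ is singly exponential, so the product $\mathcal E$ stays doubly-exponential. The two remaining steps each cross the grain of the underlying automaton type and therefore each incur a single-exponential blow-up, realized by one complementation. Forming $\mathcal F$ existentially projects away $\pi_e'$ from the universal automaton $\mathcal E$, pushing the size to triply-exponential; forming $\mathcal G$ universally projects away the marker $f$, adding the fourth exponential. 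Crucially, the finite-violation (``bad prefix'') part rides along for free: it is the \emph{deterministic} safety automaton $\mathcal B_{\neg\varphi_p}$ relativized to the first occurrence of $f$, so it contributes only to the doubly-exponential products and not to either of the two extra exponentials. Finally, the (non)emptiness of the resulting quadruply-exponential automaton is decidable in time polynomial in its size, so the overall running time is dominated by the quadruply-exponential construction, as claimed.

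The main obstacle is the quantifier alternation hidden in the uniformity condition, namely the prefix $\exists(\pi_e,\pi_p)\,\forall f\,\exists\pi_e'$ over a doubly-exponential acceptance condition, and showing that these two genuine alternations contribute exactly two further exponentials rather than more. The delicate part is to verify that each against-the-grain projection is carried out by a single complementation, and that the outermost existential is absorbed into the final nonemptiness test instead of costing another exponential; this is where the bound is pinned down to four. The second subtle point is the faithfulness of the $f$-encoding: one must confirm that relativizing the bad-prefix condition to the first occurrence of $f$ genuinely expresses ``for all bounds $n$'' — in particular that ranging over all $f$-traces raising $f$ at least once is equivalent to ranging over all $n\in\mathbb N$ — and that the existentially chosen $\pi_e'$ remains constrained by $(\pi_e,\pi_e')\in\Lambda_p$ through the universal automaton $\mathcal D_{\Lambda_p}$ embedded in $\mathcal E$, so that $\mathcal G$ recognizes exactly the non-uniform witnesses and nothing more.
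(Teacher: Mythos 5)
Your overall route is the paper's: you reduce uniformity to emptiness of the automaton $\mathcal G$ obtained from $\mathcal E$ via the two projections, and you track the sizes along the chain $\mathcal A_{\varphi_p}, \mathcal D_{\Lambda_p}, \mathcal E, \mathcal F, \mathcal G$. Your correctness half --- negating uniformity into the quantifier prefix $\exists(\pi_e,\pi_p)\,\forall f\,\exists \pi_e'$ and verifying that the first occurrence of the marker $f$ faithfully ranges over all bounds $n$ --- is a genuine addition: the paper's proof consists of size accounting only and leaves the equivalence ``uniform iff $\mathcal L(\mathcal G)=\emptyset$'' implicit.

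However, your complexity bookkeeping for the last two steps contains two errors that happen to cancel. First, the universal projection over $f$ does \emph{not} cross the grain: $\mathcal F$ is constructed as a \emph{universal} automaton, and universally projecting a universal automaton is free (one branches universally over the possible values of $f$ in the transition function); the paper accordingly keeps $\mathcal G$ at triply-exponential size. Second, emptiness of a \emph{universal} $\omega$-automaton is not decidable in time polynomial in its size --- it amounts to universality of the dual nondeterministic automaton and requires exponential time --- and this emptiness test on the triply-exponential $\mathcal G$ is precisely where the paper's fourth exponential comes from. As literally written, your account is internally inconsistent: if your $\mathcal G$ is universal and quadruply exponential, the final emptiness step costs a fifth exponential, overshooting the claimed bound; if instead your $\mathcal G$ is nondeterministic so that emptiness is cheap, then the exponential you charge to the $f$-projection is really the universal-to-nondeterministic translation that the paper folds into the emptiness check, and the projection itself is free. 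Your final bound of four exponentials is correct, but only because an overcharge and an undercharge compensate; the per-step attribution should be corrected to the paper's accounting: $\mathcal E$ doubly exponential, $\mathcal F$ triply exponential via the against-the-grain existential projection of $\pi_e'$, $\mathcal G$ still triply exponential, plus one exponential for deciding emptiness of the universal automaton $\mathcal G$.
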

\begin{proof}
$\mathcal A_{\varphi_p}$ is exponential in the length of $\varphi_p$,
$\mathcal D_{\Lambda_p}$ is doubly exponential; hence, $\mathcal E$ is also doubly exponential. Because of the projection in $O_e$, $\mathcal F$ is triply exponential. Because $\mathcal F$ is universal, the universal projection in $f$ does not cause a further increase in the number of states, the size of $\mathcal G$ is thus triply exponential. Emptiness of a universal automaton can be checked in exponential time, resulting in an overall quadruply exponential running time.
\end{proof}

\subsection{Computing Time-bounded Information Flow Assumptions}

Our final goal in this section is to compute \emph{time-bounded} information flow assumptions. Time boundedness introduces a new difficulty, because an unbounded number of traces is required to satisfy the same bound; hence, the time-bounded information flow assumption is not a $k$-hyperproperty for any value $k \in \mathbb N$. In the following, we nonetheless represent the time-bounded information flow property as a 2-hyperproperty, by employing the following trick: We introduce a fresh atomic proposition $t$, which is to be read by process $p$ as a new input and is to be computed by process $p$'s environment. The first occurrence of $t$ indicates that the time bound has been reached. This extra proposition allows us to express the time-bounded information flow assumption as a 2-hyperproperty: we first require that $t$ occurs on every trace  that appears as a left trace in $\Lambda_p$  (condition~1). Furthermore,  process $p$ must observe a difference between any pair of traces in $\Lambda_p$ before $t$ occurs on the left trace (condition~2).

We begin with the universal automaton $\mathcal A_{\Lambda_p}$ over alphabet $2^{O_e} \times 2^{O_e}$ from Theorem~\ref{thm:timeboundeddistinguishability}, which recognizes the time-bounded distinguishability relation $\Lambda_p$. We dualize $\mathcal A_{\Lambda_p}$ to obtain the nondeterministic automaton 
$\overline{\mathcal A_{\Lambda_p}}$ that recognizes all pairs of traces \emph{not} in $\Lambda_p$. 
For condition 1, we construct a nondeterministic automaton $\mathcal H_1$ that checks that $t$ occurs on the left trace; for condition 2, we construct a nondeterministic automaton $\mathcal H_2$ that ensures that the traces differ in the local inputs before $t$ occurs. Combining $\overline{\mathcal A_{\Lambda_p}}$ with $\mathcal H_1$ and $\mathcal H_2$, we obtain a nondeterministic automaton $\mathcal I$ over the alphabet $2^{O_e \cup I_p \cup O_p \cup \{t\}} \times 2^{O_e \cup I_p \cup O_p \cup \{t\}}$ that represents the time-bounded information flow assumption.
\begin{theorem} 
For a process $p$ with local specification $\varphi_p$, there exists a nondeterministic $\omega$-automaton with a doubly-exponential number of states in the length of $\varphi_p$ that recognizes the time-bounded information flow assumption~$\chi_p$ induced by~$\varphi_p$.
\end{theorem}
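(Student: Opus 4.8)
The plan is to explicitly build the automaton $\mathcal{I}$ described just before the statement and then verify that its size is doubly-exponential in $|\varphi_p|$. The guiding observation is that, thanks to the fresh proposition $t$ that marks the time bound on the left trace, the time-bounded information flow assumption $\chi_p$—which is an infinite union over functions $t(\cdot)$ and hence not a $k$-hyperproperty in the original alphabet—collapses to an ordinary $2$-hyperproperty over the enlarged alphabet $2^{O_e \cup I_p \cup O_p \cup \{t\}}$. Concretely, a pair $(\pi,\pi')$ lies in $\chi_p$ precisely when: either the projections onto $O_e$ are \emph{not} in $\Lambda_p$; or, if they are, then (condition~1) $t$ eventually occurs on the left trace $\pi$, and (condition~2) the two traces differ on $I_p$ strictly before the first occurrence of $t$ on $\pi$. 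The disjunction ``not in $\Lambda_p$, or (condition~1 $\wedge$ condition~2)'' is what $\mathcal{I}$ must recognize.

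First I would take the universal $\omega$-automaton $\mathcal{A}_{\Lambda_p}$ of Theorem~\ref{thm:timeboundeddistinguishability}, whose state count is doubly-exponential in $|\varphi_p|$, and dualize it. Dualization of a universal automaton yields a \emph{nondeterministic} automaton $\overline{\mathcal{A}_{\Lambda_p}}$ of the \emph{same} number of states (complementation-by-dualization only flips the acceptance mode and swaps the branching quantifier; no subset construction is incurred), and this automaton recognizes exactly the pairs \emph{not} in $\Lambda_p$. Next I would build the two small ``monitor'' automata $\mathcal{H}_1$ and $\mathcal{H}_2$ over the extended alphabet: $\mathcal{H}_1$ nondeterministically checks that $t$ occurs on the left trace, and $\mathcal{H}_2$ nondeterministically checks that the two traces disagree on $I_p$ at some position that precedes the first $t$ on the left trace. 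Both are constant-size (a fixed handful of states tracking ``have we seen $t$ yet'' and ``have we already seen an $I_p$-difference''). I would then assemble $\mathcal{I}$ as the union of $\overline{\mathcal{A}_{\Lambda_p}}$ with the product (intersection) $\mathcal{H}_1 \times \mathcal{H}_2$, which remains nondeterministic and has size $O(|\overline{\mathcal{A}_{\Lambda_p}}| + |\mathcal{H}_1|\cdot|\mathcal{H}_2|)$.

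The size bound then follows by bookkeeping: $|\overline{\mathcal{A}_{\Lambda_p}}| = |\mathcal{A}_{\Lambda_p}|$ is doubly-exponential; the product $\mathcal{H}_1 \times \mathcal{H}_2$ multiplies two constants and is therefore constant; and the union of nondeterministic automata adds (up to an initial state) the state counts. Hence $\mathcal{I}$ is doubly-exponential in $|\varphi_p|$, as claimed. The genuinely delicate step is the correctness of the dualization: I must argue that flipping $\mathcal{A}_{\Lambda_p}$ from universal to nondeterministic really yields the \emph{complement} language (pairs not in $\Lambda_p$) at no state-count cost, which relies on the standard duality between universal and nondeterministic $\omega$-automata under complementary acceptance conditions. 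I expect the main obstacle to be getting condition~2 exactly right across the two traces—namely that $\mathcal{H}_2$ must test for an $I_p$-difference \emph{strictly before} $t$ first holds on the \emph{left} trace, and that the overall union correctly encodes ``if in $\Lambda_p$ then (cond.~1 $\wedge$ cond.~2)'' rather than some weaker or misaligned disjunction; verifying this logical decomposition against Definition~\ref{def:time:bounded:information:flow:assumption}, and in particular that the witness $t$ on the left trace faithfully plays the role of the bound $t(\pi{\downarrow_{O_e}})$, is where the argument must be most careful.
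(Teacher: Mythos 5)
Your proposal is correct and takes essentially the same route as the paper: dualize the universal automaton $\mathcal{A}_{\Lambda_p}$ from Theorem~\ref{thm:timeboundeddistinguishability} into a nondeterministic automaton of the same (doubly-exponential) size, combine it with the constant-size monitors $\mathcal{H}_1$ and $\mathcal{H}_2$ for conditions~1 and~2, and conclude by bookkeeping. Your version merely spells out what the paper's terse proof leaves implicit, namely the disjunctive decomposition of $\chi_p$ and the fact that complementation-by-dualization incurs no blow-up.
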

\begin{proof}
The automaton $\mathcal I$ described above recognizes~$\chi_p$. We now claim that it has the stated size. By Theorem~\ref{thm:timeboundeddistinguishability}, the number of states of $\mathcal A_{\Lambda_p}$ is doubly-exponential in the length of $\varphi_p$. The dual $\overline{\mathcal A_{\Lambda_p}}$ has the same size as $\mathcal A_{\Lambda_p}$; finally, $\mathcal H_1$ and $\mathcal H_2$ each has a constant number of states. Thus, the number of states of $\mathcal I$ is also doubly-exponential in the length of $\varphi_p$.
\end{proof}

\section{Compositional Synthesis}\label{sec:behaviour}
We now use the time-bounded information flow assumptions to split the distributed synthesis problem for an architecture $(I_p,I_q,O_p,O_q,I_e)$ into two separate synthesis problems.
The local implementations are then composed and form a correct system, whose decomposition returns the solution for each process.

\subsection{Constructing the Hyper Implementations}
We begin with the synthesis of local processes.
Let $\Lambda_p$ and $\Lambda_q$ be the time-bounded distinguishability relations for $p$ and $q$, and let $\chi_p$ and $\chi_q$ be the resulting time-bounded information flow assumptions. In the individual synthesis problems, we ensure that process $p$ provides the information needed by process $q$, i.e., that the implementation of $p$ satisfies $\chi_q$, and, similarly, that $q$ provides the information needed by $p$, i.e.,  $q$'s implementation satisfies $\chi_p$. 

We carry out the individual synthesis of a process implementation on trees that branch according to the input of the process (including $\vt_p$) \emph{and} the environment's output. In such a tree, the synthesized process thus has access to full information. 
We call this tree a \emph{hyper implementation}, rather than an implementation, because the hyper implementation describes how the process will react to certain information, without specifying \emph{how} the process will receive information. 
This detail is left open until we know the other process' hyper implementation: at that point, both hyper implementations can be turned into standard strategies, which are trees that branch according to the process' own inputs. 

\begin{definition}[Hyper implementation]
   Let $p$ and $q$ be processes and $e$ be the environment.
    A $2^{O_e \cup I_p \cup \{\vt_p\}}$-branching $2^{O_p\cup \{\vt_q\}}$-labeled tree $h_p$ is a \emph{hyper implementation} of p.
\end{definition}
Since the hyper implementation has access to the full information, while the time-bounded information flow assumption only guarantees that the relevant information arrives after some bounded time, the strategy has ``too much'' information. We compensate for this by introducing a \emph{locality condition}: on two traces $(\pi_e, \pi_e') \in \Lambda_p$ in the distinguishability relation of process $p$, as long as the input to the process from the external environment is identical, process $p$'s output must be identical until $\vt_p$ happens (which signals that the bound for the transmission of the information has been reached). 
For traces $(\pi_e, \pi_e') \not\in \Lambda_p$ outside the distinguishability relation, process $p$'s output must be identical until there is a difference in the input to process $p$ or in the value of $\vt_p$.

\begin{definition} [Locality condition]\label{def:locality:condition}
Given the time-bounded distinguishability relation~$\Lambda_p$ for process~$p$, the \emph{locality condition} $\eta_p$ for~$p$ is the 2-hyperproperty induced by the following relation~$R$:
\begin{align*}
R = \{ (\pi&, \pi') \in (2^{O_e \cup I_p \cup \{t_p\}})^\omega \times (2^{O_e \cup I_p \cup \{t_p\}})^\omega \mid\\ 
& \mbox{ if } (\pi {\downarrow_{O_e}}, \pi' {\downarrow_{O_e}}) \in \Lambda_p,
\mbox{ then } \pi[0..t]{\downarrow_{O_p}} = \pi'[0..t] {\downarrow_{O_p}}  \mbox{ and}\\
& \mbox{ if } (\pi {\downarrow_{O_e}}, \pi' {\downarrow_{O_e}}) \not\in \Lambda_p,
 \mbox{ then } \pi[0..t']{\downarrow_{O_p}} = \pi'[0..t'] {\downarrow_{O_p}} \} 
\end{align*}
where $t$ is the smallest natural number such that $\vt_p \in \pi[0..t]$ or $\pi[0..t]\downarrow_{I_p} \neq \pi'[t]\downarrow_{I_p}$ (and $\infty$ if no such $t$ exists), and $t'$ is the smallest natural number such that $\pi[0..t']\downarrow_{I_p} \neq \pi'[0..t']\downarrow_{I_p}$ or $\pi[0..t']\downarrow_{\{\vt_p\}}\neq\pi'[0..t']\downarrow_{\{\vt_p\}}$ (and $\infty$ if no such $t'$ exists).
\end{definition}

We use HyperLTL to formulate the locality condition for process $b$ in our running example.
Based on the time-bounded distinguishability relation $\Lambda_b$, which relates every trace with $\vin$ in the first step to all traces on which $\neg \vin$ holds there, we can write the locality condition:
\begin{align*}
    \forall \pi,\pi'. (\vin_{\pi} &\wedge \neg \vin_{\pi'}) \rightarrow  \left((\vt_{\pi} \vee \vc_{\pi} \nleftrightarrow \vc_{\pi'} )\LTLrelease (\vout_{\pi} \leftrightarrow \vout_{\pi'})\right)\\
    \wedge (\neg(\vin_{\pi} &\wedge \neg\vin_{\pi'})) \rightarrow  \left(\vt_{\pi} \nleftrightarrow \vt_{\pi'} \vee \vc_{\pi} \nleftrightarrow \vc_{\pi'} )\LTLrelease (\vout_{\pi} \leftrightarrow \vout_{\pi'})\right) 
\end{align*}
The order in the formula is analogous to the order in \Cref{def:locality:condition}.
For all pairs of traces that are in the distinguishability relation, i.e., $\vin$ is $\mathit{true}$ on $\pi$ and $\mathit{false}$ on $\pi'$, the outputs being equivalent on both traces can only be released by $\vt$ on trace $\pi$ or by a difference in the local inputs ($\vc$).
Moreover, if the traces are not in the distinguishability relation, i.e., $\neg(\vin_{\pi} \wedge \neg \vin_{\pi'})$, then only a difference in $\vt$ or $\vc$ can release $\vout$ to be equivalent on both traces.
With the locality condition at hand, we define when a hyper implementation is locally correct:

\begin{definition}[Local correctness of hyper implementations]\label{def:local:correctness:hyper:implementations}
Let $p$ and $q$ be processes, let  $\varphi_p$ be the local specification of $p$, let $\eta_p$ be its locality condition, and let $\chi_q$ be the information flow assumption of $q$.
The hyper implementation $h_p$ of $p$ is \emph{locally correct} if it satisfies $\varphi_p$, $\eta_p$, and $\chi_q$.
\end{definition}
The specification $\varphi_p$ is a trace property, while $\eta_p$ and $\chi_q$ are hyperproperties. 
Since all properties that need to be satisfied by the process are guarantees, it is not necessary to assume explicit behaviour of process $q$ to realize process $p$.
Local correctness relies on the guarantee that the other process satisfies the current process' own information flow assumption.
Note that both the locality condition and the information flow assumption for $p$ build on the time-bounded distinguishability relation of $p$.

\subsection{Composition of Hyper Implementations}\label{sec:compositionalsynthesis}
The hyper implementations of each of the processes are locally correct and satisfy the information flow assumptions of the other process respectively.
However, the hyper implementations have full information of the inputs and are dependent on the additional variables $\vt_p$ and $\vt_q$.
To construct  practically executable  local implementations, we first compose the hyper implementations into one strategy.

\begin{definition} [Composition of hyper implementations]\label{def:composition:hyper:strategies}
Let $p$ and $q$ be two processes with hyper implementations given as infinite ${2^{O_e \cup I_p \cup \{\vt_p\}\cup \informedness_p}}$-branching $2^{O_p\cup \{\vt_q\}}$-labeled tree $h_p$ for process $p$, and an infinite $2^{O_e \cup I_q \cup \{\vt_q\}\cup \informedness_q}$-branching $2^{O_p\cup \{\vt_p\}}$-labeled tree $h_q$ for process $q$. 

Given two hyper implementations $h_p$ and $h_q$, we define the composition $h = h_p || h_q$ to be a $2^{O_e}$-branching $2^{O_p \cup O_q}$-labeled tree, where $h(v) = (h_p(f_p(v)) \cup h_q(f_q(v))) \cap (O_p \cup O_q)$ and $f_p, f_q$ are defined as follows:
\begin{align*}
     f_p(\epsilon) &= \epsilon & f_p(v \cdot x) &= f_p(v) \cdot ((x \cap I_p) \cup (h_q(f_q(v)) \cap (I_p \cup \{\vt_p\}))\\
    f_q(\epsilon) &= \epsilon  & f_q(v \cdot x) &= f_q(v) \cdot ((x \cap I_q) \cup (h_p(f_p(v)) \cap (I_q \cup \{\vt_q\}))
\end{align*}
\end{definition}

If each hyper implementation satisfies the time-bounded information flow assumption of the other process, then there exists a strategy for each process (given as a tree that branches according to the local inputs of the process), such that the combined behavior of the two strategies corresponds exactly to the composition of the hyper implementations.

The composition of the hyper implementations of the bit transmission protocol is shown in \Cref{fig:composition}.
The initial state is the combination of both processes initial states with the corresponding outputs.
We change the state after the value of $\vin$ is received. 
While process $a$ directly reacts to $\vin$, process $b$ cannot observe its value, and the composition can either be in $h^b_0$ or $h^b_1$.
Booth states have the same output.
In the next step, process $a$ communicates the value of $\vin$ by setting $\vc$ to $\text{true}$ or $\text{false}$, such that the loop states $h^a_1, h^a_1$ and $h^a_2, h^b_3$ are reached. 
\begin{figure}[t]
     \begin{center}
     \resizebox{.8\linewidth}{!}{
    \tikzstyle{state}=[draw, circle, fill=none, minimum width=1cm, 
minimum height = 1.3cm, align=center, thick]

\begin{tikzpicture}[->,>=stealth',shorten >= 1pt,auto]

\node[state] (p0)[label=above left:{$h_0^a, h^b_0$}] {%
    $\vc,$\\ 
    $\neg\vout$
};

\node (left) [left = 0.5 of p0]{};

\node[state] (p1) [above right = -0.1 and 1.5 of p0, label=above left:{$h_1^a, (h^b_1, h^b_2)$}] {%
  $\vc,$ \\
  $\neg \vout$
 };
\node[state] (p2) [below right= -0.1 and 1.5 of p0, label=below left:{$h_2^a, (h^b_1, h^b_2)$}]{
    $\neg \vc,$\\
    $\neg \vout$
    };

\node[state] (p3) [right = 1.2 of p1, label=above right:{$h_1^a, h^b_1$}] {%
    $\vc,$\\
    $\neg \vout$
  };
  
\node[state] (p4) [right = 1.2 of p2, label=below right:{$h_2^a, h^b_3$}] {%
    $\neg \vc$\\
    $\vout$
  };

\node (format) [right = 2.2 of p4]{};

\path (left) edge (p0)
      (p0) edge[thick] node[above left, xshift=3pt,yshift=1pt,] {%
        $\neg \vin$
      } (p1)
      (p0) edge[thick] node [below left, xshift=7pt,yshift=-1pt, align=center] {%
      $ \vin$
      }(p2)
       (p0) edge[loop left, thick, draw=none] node [left] {%
       $\phantom{\LTLtrue}$}
       (p0) 
     (p2) edge[thick] node[below] {
      $\ast$
      }(p4)
      (p1) edge[thick] node[above] {
      $\ast$
      }(p3)
      (p4) edge[loop right, thick] node [right] {%
      $\LTLtrue$}(p4) 
      (p3) edge[loop right, thick] node [right] {%
      $\LTLtrue$}(p3)
      ;
\end{tikzpicture}
    }
    \end{center}
    \caption{The composition of the hyper implementations of $a$ in \Cref{fig:implemetation of -a} and $b$ in \Cref{fig:implementation-b}. The states are labeled with the combination of states reached for both processes, and multiple, if they cannot be distinguished.}
    \label{fig:composition}
\end{figure}
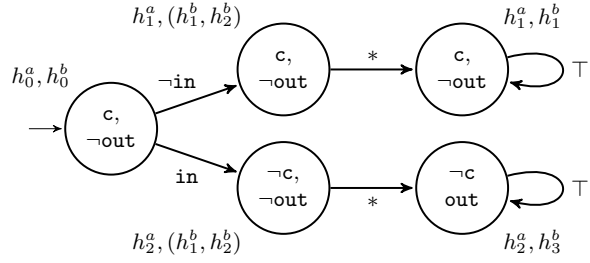

The local strategies of the processes are constructed from the composed hyper implementations. As an auxiliary notion we introduce the \emph{knowledge set}: the set of finite traces in the composition that cannot be distinguished by a process.

\begin{definition}[Knowledge set]\label{def:knowledge:set}
Let $p$ and $q$ be two processes with composed hyper implementations $h = h_p || h_q$.
For a finite trace $v \in (2^{I_p})^*$ of inputs to $p$, we define the knowledge set $K_p(v)$ 
to be
\[K_p(v)\triangleq\{w~|~\mbox{$w$ is a finite trace of $(2^{O_e})^{*}$ and }f_p(w) = v\}.\]
\end{definition}

\begin{lemma}\label{lem:knowledge:set}
For all s $v, v' \in {(2^{I_p})}^*$, if $K_p(v) = K_p(v')$ then $h(v)\downarrow_{O_p} = h(v')\downarrow_{O_p}$.
\end{lemma}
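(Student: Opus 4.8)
The plan is to show that the $O_p$-component of the composition's labeling depends on an environment history $w \in (2^{O_e})^*$ only through the induced local input history $f_p(w)$, and then to read off the claim from the fact that each knowledge set $K_p(v)$ is exactly the $f_p$-fiber over $v$. Concretely, the substantive content is that $h(w)\downarrow_{O_p}$ is constant as $w$ ranges over a single knowledge set; the statement as phrased with $K_p(v) = K_p(v')$ then follows because distinct fibers of the function $f_p$ are disjoint.

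First I would unwind \ref{def:composition:hyper:strategies}. For every $w$ the label is $h(w) = (h_p(f_p(w)) \cup h_q(f_q(w))) \cap (O_p \cup O_q)$, so intersecting with $O_p$ gives $h(w)\downarrow_{O_p} = (h_p(f_p(w)) \cap O_p) \cup (h_q(f_q(w)) \cap O_p)$. Since $h_p$ is labeled over $O_p \cup \{\vt_q\}$ and $h_q$ over $O_q \cup \{\vt_p\}$, and the sets $O_p$, $O_q$, and $\{\vt_p, \vt_q\}$ are pairwise disjoint, the second summand is empty and we obtain the key projection identity $h(w)\downarrow_{O_p} = h_p(f_p(w)) \cap O_p$. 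This exhibits $h(\cdot)\downarrow_{O_p}$ as a function of $f_p(w)$ alone.

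Next I would invoke \ref{def:knowledge:set}: every $w \in K_p(v)$ satisfies $f_p(w) = v$, so for any $w, w' \in K_p(v)$ the projection identity yields $h(w)\downarrow_{O_p} = h_p(v) \cap O_p = h(w')\downarrow_{O_p}$, i.e.\ the $O_p$-output is constant on $K_p(v)$. This is exactly the well-definedness needed to later set $s_p(v) \triangleq h(w)\downarrow_{O_p}$ for an arbitrary $w \in K_p(v)$. For the literal statement, note that if $K_p(v) = K_p(v')$ is nonempty then any common element $w$ has $f_p(w)$ equal to both $v$ and $v'$, forcing $v = v'$ and making the two sides identical; the empty-fiber case is vacuous.

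The hard part is not the projection algebra but convincing oneself that this factorization is legitimate even though $h_p$ is a full-information hyper implementation: $h_p$ branches on $2^{O_e \cup I_p \cup \{\vt_p\}}$ and could a priori depend on environment bits outside $I_p$, whereas the composition evaluates it on $f_p(w)$, which carries only $p$'s local inputs together with $\vt_p$. I would address this by checking, by induction on the length of $w$ using the mutually recursive clauses for $f_p$ and $f_q$, that $f_p(w)$ is indeed $p$'s genuine local input history, so that feeding $h_p$ this history -- rather than the full environment view -- is exactly what the composition prescribes. The soundness of restricting the hyper implementation to its local-input projection is what the locality condition $\eta_p$ of \ref{def:locality:condition} guarantees, though that guarantee is needed for the later correctness argument rather than for the present well-definedness claim.
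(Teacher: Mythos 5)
Your proof trivializes exactly the content this lemma is meant to carry, and the place where it diverges from the paper is your closing claim that the locality condition $\eta_p$ is ``needed for the later correctness argument rather than for the present well-definedness claim.'' The paper's proof is the opposite: it argues by contradiction, taking $w, w' \in K_p(v)$ with $h(w)\downarrow_{O_p} \neq h(w')\downarrow_{O_p}$, observing that $w$ and $w'$ must then differ in some variable of $O_e \setminus I_p$ (they agree on $p$'s local inputs), and ruling out the output difference by combining the locality condition of Definition~\ref{def:locality:condition} --- $h_p$'s outputs on two branches must agree until $\vt_p$ occurs or the $I_p$-inputs differ --- with the fact that $h_q$ satisfies the time-bounded information flow assumption $\chi_p$ of Definition~\ref{def:time:bounded:information:flow:assumption}, which guarantees that $\vt_p$ cannot be raised before a difference in $I_p$ is visible to $p$. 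In the intended semantics, the hyper implementation $h_p$ is genuinely exercised on full-information branches (that is the whole point of making it $2^{O_e \cup I_p \cup \{\vt_p\}}$-branching, and the raison d'\^etre of $\eta_p$): two histories in the same knowledge set present $h_p$ with \emph{different} arguments, differing in hidden environment variables and, a priori, in the time at which $\vt_p$ arrives, since $\vt_p$ is computed by $h_q$ from inputs $I_q$ that may contain variables hidden from $p$. Constancy of the $O_p$-output over a knowledge set is therefore not a definitional fact obtainable by ``projection algebra''; it is precisely what $\eta_p$ together with $h_q \models \chi_p$ must deliver, and your proof uses neither.

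In your defense, the literal text of Definition~\ref{def:composition:hyper:strategies} does write $x \cap I_p$ in the clause for $f_p$, under which $f_p(w)$ carries only the local view, your factorization $h(w)\downarrow_{O_p} = h_p(f_p(w)) \cap O_p$ holds by unwinding, and the lemma collapses to the disjoint-fiber observation you make (your handling of the literal $K_p(v)=K_p(v')$ phrasing via disjointness of fibers, and of the empty case, is fine as far as it goes). But this reading cannot be the intended one: it would leave $h_p$'s branching over $O_e \setminus I_p$ forever unexercised, render both $\eta_p$ and $\chi_p$ idle throughout the composition argument, and contradict the paper's own proof of this very lemma; note also that under your reading $f_p(w) = v$ with $v \in (2^{I_p})^*$ would additionally force $\vt_p$ never to occur, a further sign that Definition~\ref{def:knowledge:set} and the surrounding definitions are stated loosely and must be read as comparing local views of full-information branches. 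So while your argument is internally consistent with the letter of the definitions, it misses the one idea the lemma exists to capture --- that locality plus the partner's information flow guarantee are what make the full-information tree factor through the process' local view --- and a proof faithful to the paper has to run the contradiction sketched above rather than declare the factorization definitional.
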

\begin{proof}
If $K_p(v)$ is a singleton or empty, then the lemma is trivially true.
Assume $|K_p(v)| \geq 1$ and there exists $w, w' \in K_p(v)$ s.t. $h(w)\downarrow_{O_p} \neq h(w')\downarrow_{O_p}$.
Since $w$ and $w'$ agree on the local inputs to $p$, there exists at least one $a \in O_e \backslash I_p$ s.t. $w \downarrow_{O_{a}} \neq w' \downarrow_{O_{a}}$. Then, $h_p(w) \neq h_p(w')$ has to hold 
following the function $f_p$ of \Cref{def:composition:hyper:strategies}.
Given the locality from \Cref{def:locality:condition}, this is only possible if $t_p$ was observed in the input to $h_p$, which is replaced by the output of $h_q$ in \Cref{def:composition:hyper:strategies}.
Since $h_q$ satisfies the time bounded information flow assumption $\chi_p$ from \Cref{def:time:bounded:information:flow:assumption}, $h_p$ observes a difference in $I_p$ before it reacts to the global inputs.
Therefore, $h(w)\downarrow_{O_p} = h(w')\downarrow_{O_p}$ which contradicts the assumption.
\end{proof}

The local strategies from the composed hyper implementations are then defined as follows:

\begin{definition}[Local strategies from hyper implementations]\label{def:local:strategies}
Let $p$ and $q$ be two processes with time-bounded information flow assumptions $\chi_p$ and $\chi_q$, and $h = h_p || h_q$ be the composition of their hyper implementations.
For $j \in \{p,q\}$ the strategy  $s_j$, represented as a $2^{I_j}$-branching $2^{O_j}$-labeled tree for process~$j$, is defined as follows: 
\[
    s_j(\epsilon) = \epsilon \phantom{space}
    s_j(v) = \twopartdef{\emptyset}{|K_j(v)| = 0}{h(min(K_j(v)))\downarrow_{O_j}}{|K_j(v)| > 0}
\]
where $min(K_j(v))$ is the smallest trace based on an arbitrary order over $K_j(v)$.
\end{definition}
The base case of the definition inserts a label for unreachable traces in the composed hyper implementation.
For example, the local inputs $I_p \backslash O_e$ are determined by $s_q$, 
and not all input words in $(2^{I_q})^*$ are possible.
Process $p$'s local strategy $s_p$  can discard these input words.
The second case of the definition picks the smallest trace in the knowledge set and computes the outputs from $h$ that are local to a process.
Intuitively, the outputs of $h$ have to be the same for every trace that a process considers possible in the composed hyper implementations. 
We therefore pick one of them, compute the output of the composed hyper-strategy, and restrict the output to the local outputs of the process.
The following theorem states the correctness of the construction in \Cref{def:local:strategies}.

\begin{theorem}\label{th:correctness:decomposition}
Let $p$ and $q$ be two processes with time-bounded information flow assumptions $\chi_p$ and $\chi_q$, let $h = h_p || h_q$ be the composition of the hyper implementations, and $s_p$ and $s_q$ be the local strategies.
Then, for all $v \in (2^{O_e})^*$ it holds that 
$h(v) = s_p(g_p(v)) \cup s_q(g_q(v))$ where
$g_p, g_q$ are defined as follows:
\begin{align*}
    g_p(\epsilon) &= \epsilon &g_p(v \cdot x) &= g_p(v) \cdot ((x \cap I_p) \cup (s_q(g_q(v)) \cap I_p)\\
    g_q(\epsilon) &= \epsilon &g_q(v \cdot x) &= g_q(v) \cdot ((x \cap I_q) \cup (s_p(g_p(v)) \cap I_q)
\end{align*}
\end{theorem}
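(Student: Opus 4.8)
The plan is to prove the identity by induction on the length of the environment-output history $v \in (2^{O_e})^*$, carrying along a stronger joint invariant that ties the two input-tracking functions together. Concretely, for each $j \in \{p,q\}$ I would maintain two statements: (I1) $g_j(v) = f_j(v)\downarrow_{I_j}$, equivalently $v \in K_j(g_j(v))$; and (I2) $s_j(g_j(v)) = h(v)\downarrow_{O_j}$. Once (I1) and (I2) hold for every $v$, the theorem is immediate: since $h$ is an $O_p \cup O_q$-labeled tree and $O_p, O_q$ are disjoint, $h(v) = h(v)\downarrow_{O_p} \cup h(v)\downarrow_{O_q} = s_p(g_p(v)) \cup s_q(g_q(v))$.

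The implication (I1)$\Rightarrow$(I2) is where \Cref{lem:knowledge:set} does the real work. Given (I1), the set $K_j(g_j(v))$ is nonempty (it contains $v$), so by \Cref{def:local:strategies} we have $s_j(g_j(v)) = h(\min(K_j(g_j(v))))\downarrow_{O_j}$. Both $v$ and $\min(K_j(g_j(v)))$ belong to the same knowledge set $K_j(g_j(v))$, so the argument in the proof of \Cref{lem:knowledge:set} — all environment histories in one knowledge set induce the same local $O_j$-output — yields $h(\min(K_j(g_j(v))))\downarrow_{O_j} = h(v)\downarrow_{O_j}$, which is (I2). This is precisely the step that exploits the information flow assumption: $h_j$ may branch on the explicit signal $\vt_j$ that the local strategy cannot observe, but $\chi_j$ together with the locality condition $\eta_j$ guarantees that any genuinely relevant distinction has already surfaced in $I_j$, so dropping $\vt_j$ and collapsing a knowledge set to its representative is harmless.

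It remains to propagate (I1) inductively; note that (I2) for $v$ feeds back into (I1) for $v\cdot x$, so the two invariants must be proven \emph{jointly}. The base case $g_j(\epsilon) = \epsilon = f_j(\epsilon)\downarrow_{I_j}$ is trivial. For the step I would compute $f_p(v\cdot x)\downarrow_{I_p}$ directly from \Cref{def:composition:hyper:strategies}: projecting onto $I_p$ deletes the $\vt_p$ component, leaving $f_p(v)\downarrow_{I_p} \cdot ((x \cap I_p) \cup (h_q(f_q(v))\cap I_p))$, which by the induction hypothesis equals $g_p(v)\cdot((x \cap I_p) \cup (h_q(f_q(v))\cap I_p))$. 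Comparing with $g_p(v\cdot x) = g_p(v)\cdot((x\cap I_p)\cup(s_q(g_q(v))\cap I_p))$, the only thing to verify is $h_q(f_q(v))\cap I_p = s_q(g_q(v))\cap I_p$. Using (I2) for $v$ and the fact that the labels of $h_p$ and $h_q$ are disjoint apart from the $\vt$-signals, one obtains $s_q(g_q(v)) = h(v)\downarrow_{O_q} = h_q(f_q(v))\cap O_q$; intersecting with $I_p$ and using $\vt_p \notin I_p$ gives the equality, since both sides reduce to $h_q(f_q(v))\cap I_p \cap O_q$. The symmetric computation handles $g_q$, completing the step.

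The routine part is the algebra of the projections; the load-bearing point is the transfer in (I2) from the canonical representative $\min(K_j(g_j(v)))$ to the actual trace $v$, which rests entirely on \Cref{lem:knowledge:set}. The main obstacle I anticipate is purely the bookkeeping of the $\vt_j$ signals: because $f_j$ records histories over $I_j \cup \{\vt_j\}$ while $g_j$ and $s_j$ live over $I_j$, one must consistently project out $\vt_j$ and check that it never affects the inter-process inputs (it does not, since $\vt_p \notin I_p$ and $\vt_q \notin I_q$). Making these projection identities precise, and verifying that the informedness components $\informedness_j$ in the branching alphabet of the hyper implementations likewise drop out under $\downarrow_{I_j}$, is the only delicate part of an otherwise straightforward induction.
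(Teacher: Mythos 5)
Your proposal is correct and takes essentially the same route as the paper's own proof: an induction over environment histories $v \in (2^{O_e})^*$ that maintains (implicitly, in the paper's case) that $g_j(v)$ agrees with the $I_j$-projection of $f_j(v)$ and then invokes \Cref{lem:knowledge:set} to replace $v$ by the representative $\min(K_j(g_j(v)))$ without changing the $O_j$-output. Your version merely makes explicit, as the joint invariants (I1) and (I2), what the paper's induction step carries informally (``since $s_q(g_q(v))$ and $g_p(v)$ is assumed correct''), including the projection identity $h_q(f_q(v)) \cap I_p = s_q(g_q(v)) \cap I_p$ and the bookkeeping of the $\vt_j$ signals, which the paper leaves tacit.
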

\begin{proof}
Proof by induction over $v \in (2^{O_e})^*$.
\emph{Base case:} Let $v = \epsilon$, then $s_p(g_p(\epsilon)) \cup s_q(g_q(\epsilon)) = h(\epsilon) = \epsilon$. 
\emph{Induction Step:} The induction step is shown from  $v \in (2^{O_e})^*$ to $v \cdot x \in (2^{O_e})^*$, with $x \in 2^{O_e}$.
Inserting $g_p$ from \Cref{th:correctness:decomposition} we obtain $g_p(v \cdot x) = g_p(v) \cdot ((x \cap I_p) \cup (s_q(g_q(v) \cap I_p))$.
Since $s_q(g_q(v))$ and $g_p(v)$ is assumed correct, we show that the input trace returned by $g_p$ and given to $s_p$ is correct: The input is local to $p$ because $x \cap I_p$ and $s_q(g_q(v))\cap I_p)$ remove unobservable inputs, and all outputs of the previous step from $q$ are added to the current input.
It remains to show that the outputs of the local strategies combined are equal to the output of $h$: $h(v\cdot x) = s_p(g_p(v \cdot x)) \cup s_q(g_q(v\cdot x))$.
Let $v' \cdot x' = g_p(v\cdot x)$. 
Given \Cref{def:local:strategies} and \Cref{def:knowledge:set}, we know that $s_p(v' \cdot x') = h(K(w))\downarrow_{O_p}$, with $w \in K_p(v' \cdot x')$.
Since $K_p(v')$ is assumed correct, we show that adding $x'$ to $v'$ still results in correctness of $h(K(w))$.
Following \Cref{lem:knowledge:set}, all elements in $K(v' \cdot x')$ and therefore all corresponding paths in $h$ have the same label and picking any with $min(K_p(v' \cdot x'))$ is correct.
It follows that $h(v \cdot x)\downarrow_{O_p} = s_p(g_p(v \cdot x))$.
Using the same argument for $s_q$ by interchanging $p$ and $q$ in every index yields the correctness of the theorem, i.e., for all $v \in (2^{O_e})^*$ it holds that 
$h(v) = s_p(g_p(v)) \cup s_q(g_q(v))$.
\end{proof}

Combining all definitions and theorems of the previous sections, we conclude with the following corollary.
\begin{corollary}
Let $(I_p,I_q,O_p,O_q,I_e®)$ be an architecture and $\varphi = \varphi_p \wedge \varphi_q$ be a specification. 
If the hyper-strategies $h_p$ and $h_q$ are locally correct, then the implementation $(s_p, s_q)$ satisfies $\varphi$.
\end{corollary}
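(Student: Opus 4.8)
The plan is to obtain the corollary as a direct consequence of Theorem~\ref{th:correctness:decomposition}, which already proves that the extracted local strategies reproduce the composed hyper implementation step by step; all that then remains is to transport the trace-level correctness of each individual hyper implementation through the composition. Unfolding Definition~\ref{def:local:correctness:hyper:implementations}, local correctness of $h_p$ and $h_q$ gives that $h_p$ satisfies $\varphi_p$, $\eta_p$, and $\chi_q$, and that $h_q$ satisfies $\varphi_q$, $\eta_q$, and $\chi_p$; in particular every trace of $h_p$ meets the trace property $\varphi_p$ and every trace of $h_q$ meets $\varphi_q$. Forming $h = h_p || h_q$ as in Definition~\ref{def:composition:hyper:strategies} and applying Theorem~\ref{th:correctness:decomposition} yields $h(v) = s_p(g_p(v)) \cup s_q(g_q(v))$ for every $v \in (2^{O_e})^*$; since $g_p, g_q$ coincide with the input-feedback functions used to define the composition $s_p || s_q$, this gives $\mathit{Traces}(s_p, s_q) = \mathit{Traces}(h)$. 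It therefore suffices to prove $\mathit{Traces}(h) \subseteq \varphi_p \cap \varphi_q$.

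For the inclusion $\mathit{Traces}(h) \subseteq \varphi_p$, I would fix an environment input $\bar{\imath} \in (2^{O_e})^\omega$ together with the trace $\sigma$ it induces in $h$. The crux is to recognize $\sigma \downarrow_{O_p \cup O_e}$ as a trace of the single hyper implementation $h_p$: along $\bar{\imath}$ the composition drives $h_p$ on the input history produced by $f_p$, and the $O_p$-labels it reads off are precisely those occurring in $\sigma$. Because $h_p$ satisfies the trace property $\varphi_p$ on all of its traces, we get $\sigma \downarrow_{O_p \cup O_e} \models \varphi_p$, and as $\varphi_p$ mentions only $O_p \cup O_e$ this lifts to $\sigma \models \varphi_p$. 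The mirror-image argument using $h_q$, $f_q$, and $\varphi_q$ establishes $\mathit{Traces}(h) \subseteq \varphi_q$, and combining the two inclusions gives $\mathit{Traces}(s_p, s_q) = \mathit{Traces}(h) \subseteq \varphi_p \cap \varphi_q = \varphi$, which is the claim.

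I expect the main obstacle to be exactly this path-correspondence claim, because $h_p$ is defined with full access to $O_e$ whereas the composition wires only $p$'s local inputs $I_p$ and the timing signal $\vt_p$ into it. This is where the two hyperproperty hypotheses earn their keep. The locality condition $\eta_p$ (Definition~\ref{def:locality:condition}) ensures that $h_p$'s $O_p$-output cannot depend on the part of $O_e$ that $p$ does not locally observe before $\vt_p$ fires, so collapsing the $O_e$-branching of $h_p$ to $p$'s local view is sound. The assumption $\chi_p$, which $h_q$ satisfies, ensures that the $\vt_p$ value delivered to $h_p$ through $f_p$ is set no later than the bound at which $p$ must react, so that $h_p$ is never forced to fix an output before the distinguishing information has actually arrived on $I_p$. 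These are precisely the properties already exploited in Lemma~\ref{lem:knowledge:set}; the additional work here is to check that they entail not merely the behavioral identity of the extracted strategies but also the trace-by-trace identification of $h$ with genuine runs of $h_p$ and $h_q$, from which satisfaction of the specification follows.
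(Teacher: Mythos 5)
Your proposal matches the paper's intended derivation: the paper gives no explicit proof beyond ``combining all definitions and theorems of the previous sections,'' and the combination it has in mind is exactly the one you spell out --- Theorem~\ref{th:correctness:decomposition} to identify $\mathit{Traces}(s_p,s_q)$ with the traces of $h = h_p || h_q$, local correctness (Definition~\ref{def:local:correctness:hyper:implementations}) to obtain $\varphi_p$ and $\varphi_q$ on the component trees, and the locality condition plus the information flow assumptions, as packaged in Lemma~\ref{lem:knowledge:set}, to bridge the full-information branching of the hyper implementations with the processes' local views. If anything, you are more explicit than the paper about the one delicate step (that composed traces correspond to genuine runs of $h_p$ and $h_q$ even though the composition feeds $h_p$ only its local inputs and $\vt_p$), which the paper leaves implicit.
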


\section{A More Practical Approach}\label{sec:amorepracticalapproach}
\label{sec:practical}
A major disadvantage of the synthesis approach of the preceding sections is that the hyper implementations are based on the full set of environment outputs; as a result, hyper implementations branch according to inputs that are not actually available; this, in turn, necessitates the introduction of the locality condition. 

In this section, we develop a more practical approach, where the branching is limited to the information that is actually available to the process: this includes any environment output directly visible to the process and, additionally, the information the process is guaranteed to receive according to the information flow assumption. As a result, the synthesis of the process is sound without need for a locality condition.
We develop this approach under two assumptions: First, we assume that the time-bounded information flow assumption only depends on environment outputs the sending process can actually see; second, we assume that the time-bounded information flow assumption can be decomposed into a finite set of classes in the following sense: For a trace $\pi$ of environment outputs, the information class $[\pi]_p$ describes that, on the trace $\pi$, the process $p$ eventually needs to become aware that the current trace is in the set $[\pi]$. The information class is obtained by collecting all traces that are \emph{not} related to $\pi$ in the time-bounded distinguishability relation.

\begin{definition}[Information classes]
Given a time-bounded distinguishability relation~$\Lambda_p$ for process~$p$, the \emph{information class} $[\pi]_p$ of a trace $\pi$ over $O_e$ is the following set of traces:
$\quad
[\pi]_p = (2^{O_e})^\omega  \setminus \{ \pi' \in (2^{O_e})^\omega  \mid (\pi, \pi') \in \Lambda_p \}
$
\end{definition}

The next definition relativizes the specification of the processes for a particular information class, reflecting the fact that the process does not know the actual environment output, but only its information class; hence, the process output needs to be correct for all environment outputs in the information class.

\begin{definition}[Relativized specification]
For a process $p$ with specification $\varphi_p$ and an information class $c$, the relativized specification $\varphi_{p,c}$ is the following
trace property over $(I_p \cap O_e) \cup O_p$:
\begin{align*}
\varphi_{p,c} = \{ \pi_e \sqcup \pi_p \mid \pi_e \in (2^{I_p \cap O_e})^\omega, \pi_p \in (2^{O_p})^\omega \mbox{ s.t. } \forall \pi_e' \in c .\ \pi'_e \sqcup \pi_p \models \varphi_p \}
\end{align*}
\end{definition}

The component specification, which is the basis for the synthesis of the process, must take into account that the process does not know the information class in advance; the behavior of the other process will only eventually reveal the information class.
Let $IC$ be the set of information classes for process $p$. 
Assume that this set is finite. 
We now replace the inputs of the process that come from the other process with new input channels $IC$ as new inputs. 
In the hyper implementation, receiving such an input reveals the information class to the process. 
In the actual implementation, the information class will be revealed by the actual outputs of the other process that are observable for $p$.
The component specification requires that the processes satisfy the relativized specification under the assumption that the information class is eventually received. We encode this assumption as a trace condition $\psi$, which requires that exactly one of the elements of $IC$ eventually occurs.

\begin{definition}[Component specification]
\label{def:componentspec}
For process $p$ with specification $\varphi_p$, the component specification $\langle \varphi_{p}\rangle$ over $(I_p\cap O_e)\cup IC\cup O_p$ is defined as
\begin{align*}
\langle \varphi_{p} \rangle = \{ \pi \in (2^{(I_p \cap O_e) \cup IC \cup O_p})^\omega \mid \mbox{ if } \pi \models \psi \mbox{ then } \pi \models \bigwedge_{c \in IC} (\LTLdiamond c \rightarrow \varphi_{p,c} \}
\end{align*}
where $\psi$ is the following trace property over $(I_p \cap O_e) \cup IC \cup O_p$:
\begin{align*}
\psi = \{ \pi \in (2^{(I_p \cap O_e) \cup IC \cup O_p})^\omega \mid \exists \pi' \in (2^{O_e})^\omega.\ 
\pi\downarrow_{I_p \cap O_e} = \pi'\downarrow_{I_p \cap O_e}\\ \mbox{ and } \pi \models \LTLdiamond [\pi'] \mbox{ and exactly one element of } IC \mbox{ occurs on } \pi\} 
\end{align*}
\end{definition}
\begin{figure}[t]
     \centering
     \begin{subfigure}[b]{0.49\textwidth}
         \centering
         \resizebox{.9\linewidth}{!}{
         \tikzstyle{state}=[draw, rectangle, fill=none, minimum width=.8cm, 
minimum height = .8cm, rounded corners=1mm, align=center, thick]

\begin{tikzpicture}[->,>=stealth',shorten >= 1pt,auto]
\node[state] (p1) {%
  $~p_1$
};
\node (left) [left =0.7 of p1]{};
\node[state] (p2) [right= 1.5 of p1] {%
  $~p_2$};
\node[state, fill=gray!60] (env) [above right = .75 and .25 of p1]{$\text{env}$};

\node[state, draw = none, minimum height=.1cm] (p1below) [below= .5 of p1] {%
 };
\node[state, draw = none, minimum height=.1cm] (p2below) [below= .5 of p2] {%
 };

\node (phantom) [below =.3 of p1, draw = none]{\phantom{test}};
\path (env) edge[thick, transform canvas={yshift=0mm}] node[above left, yshift=-1mm] {%
        $\vin_{p_1}$
      }  (p1)
      (env) edge[thick, transform canvas={yshift=0mm}] node[above right, yshift=-1mm] {%
        $\vin_{p_2}$
      }  (p2)
      (p1) edge[thick, transform canvas={yshift=1.5mm}] node[above] {%
        $\vc_{p_1}$
      } (p2)
      (p2) edge[thick, transform canvas={yshift=-1.5mm}] node[below] {%
        $\vc_{p_2}$
      } (p1)
      (p1) edge[thick, transform canvas={yshift=0mm}] node[left] {
      $\vout_{p_1}$
      }(p1below)
      (p2) edge[thick, transform canvas={yshift=0mm}] node[right] {
      $\vout_{p_2}$
      }(p2below)
      ;
\end{tikzpicture}
         }
         \caption{}
         \label{fig:atomic-architecture}
     \end{subfigure}
     \begin{subfigure}[b]{0.49\textwidth}
         \centering
         \resizebox{.9\linewidth}{!}{
         \tikzstyle{state}=[draw, circle, fill=none, minimum width=1cm, 
minimum height = 1.3cm, align=center, thick]

\begin{tikzpicture}[->,>=stealth',shorten >= 1pt,auto]

\node[state] (p0) {%
    $b_0$\\
   $\neg\vout$
};

\node (left) [left = 0.5 of p0]{};

\node[state] (p1) [above right = .1 and 1.5 of p0] {%
  $b_1$\\
  $\vout$
 };
\node[state] (p2) [below right= .1 and 1.5 of p0]{
    $b_2$\\
    $\neg \vout$
    };


\path (left) edge (p0)
      (p0) edge[thick] node[above left, xshift=5pt, yshift=1pt] {%
        $ \vic_0$
      } (p1)
      (p0) edge[thick] node [below, align=center] {%
      $ \vic_1$
      }(p2)
       (p0) edge[loop left, thick, draw=none] node [left] {%
       $\phantom{\LTLtrue}$}
       (p0) 
      (p2) edge[loop right, thick] node [right] {%
      $\LTLtrue$}(p2) 
      (p1) edge[loop right, thick] node [right] {%
      $\LTLtrue$}(p1)
      (p0) edge[loop above, thick] node [above] {%
      $\neg \vic_0 \wedge \neg \vic_1$}(p0)
      ;
\end{tikzpicture}
         }
         \caption{}
         \label{fig:componentspecificationb}
     \end{subfigure}
    \caption{The architecture used for our experiments in (a) where the number outputs, inputs, and communication channels can vary. \Cref{fig:componentspecificationb} shows the implementation of process $b$ for its bit transmission component specification.}
        \label{fig:bit-transmission}
\end{figure}
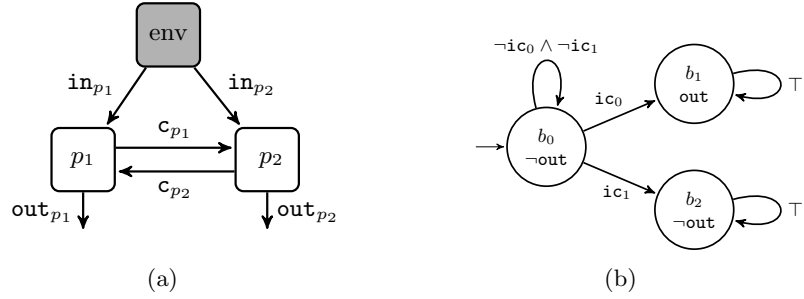
\noindent

The component specification allows us to replace the locality condition (Def.~\ref{def:locality:condition}), which is a hyperproperty, with a trace property. 
Note, however, that the process additionally needs to satisfy the information flow assumption of the other process, which may in general depend on the full set $O_e$ of environment outputs. This would require us to synthesize the process on the full set $O_e$, and to re-introduce the locality condition. In practice, however, the information flow assumption of one process often only depends on the information of the other process. In this case, it suffices to synthesize each process based only on the locally visible environment outputs. 

\Cref{fig:componentspecificationb} shows the implementation of $b$ for its component specification $\langle \varphi_b \rangle$.
In contrast to its hyper implementation (cf. \Cref{fig:hyper-implementation-b}), it does not branch according to $\vin$ and $\vt_p$, but only variables in $IC$.
The specification is encoded as the following LTL formula:\\
\resizebox{\textwidth}{!}{
$\langle \varphi_b \rangle = \big(\LTLglobally \neg\vic_0 \vee \LTLglobally \neg\vic_1)\wedge\LTLfinally \big((\vic_0 \vee \vic_1)\big)
\rightarrow  \big((\LTLfinally \vic_0 \rightarrow \LTLfinally \vout) \wedge (\LTLeventually \vic_1 \rightarrow \LTLglobally \neg \vout)\big)$}

The left hand side of the implication represents the assumption $\psi$, the right hand side specifies the guarantee for each information class.
The composition and decomposition can be performed analogously to the hyper implementations, where we map the value of $\vic$ to the values of the communication variables.
We construct the automata for component specification as follows. 
\begin{enumerate}
    \item By complementing the automaton for the time-bounded distinguishability relation, we obtain an automaton $\mathcal A_{IC}$ that associates each trace over $O_e$ with its information class: i.e., the pair $(v,w)$ of traces over $O_e$ is accepted by the complement automaton iff $(v,w)$ is not in the time-bounded distinguishability relation. 
    \item We obtain the information classes in the following iterative process (under the assumption that the number of information classes is finite):
    \begin{enumerate}
        \item We identify some trace $v$ such that there is a pair $(v,w)$ in the language of $\mathcal{A}_{IC}$;  
        \item for each such trace $v$, we compute an automaton $\mathcal A_{[v]}$ for the information class $[v]$, i.e., an automaton that accepts all traces $v'$ with $(v',w') \in \mathcal{L}(\mathcal{A}_{IC})$ iff $(v,w') \in \mathcal{L}(\mathcal{A}_{IC})$ for all $w'$;
        \item we eliminate all $(v,w)$ with $v \in \mathcal L(\mathcal{R})$ from $\mathcal{A}_{IC}$;
        \item repeat until the language of $\mathcal{A}_{IC}$ is empty.
    \end{enumerate} 
    \item We build an automaton $\mathcal A_{\phi_{p,c}}$ for the relativized specification. The automaton uses universal branching to guess the trace from the information class and applies $\varphi_p$ to
    each guess.
    \item Using the automata $\mathcal A_{[v]}$ for the information classes we build an automaton $\mathcal A_\psi$ for condition $\psi$ from Definition~\ref{def:componentspec}.  
    \item Using $\mathcal A_\psi$ and $\mathcal A_{\phi_{p,c}}$, we build an automaton $\mathcal A_{\varphi_p}$for the component specification.  
\end{enumerate}

\section{Experiments}\label{sec:experiments}
The focus of our experiments is on the performance of the compositional synthesis approach compared to non-compositional synthesis methods for distributed systems. While the time-bounded information flow assumptions and the component specification can be computed automatically by automata constructions, we have, for the purpose of these experiments, built them manually and encoded them as formulas in HyperLTL or LTL, which were then entered to the \textsc{BoSy}/\textsc{BoSyHyper}~\cite{HyperBosy} synthesis tool.
Our experiments are based on the following benchmarks:
\begin{itemize}
	\item \textbf{AC.} \textit{Atomic commit}. The atomic commitment protocol specifies that the output of a local process is set to true iff the observable input and the unobservable inputs are true as well. We only consider one round of communication, the initial input determines all values. The parameter shows how many input variables each process receives, Par. = 1 for the running example.
	\item \textbf{EC.} \textit{Eventual commit}. The atomic commit benchmark extended to eventual inputs - if all inputs (independently of each other) eventually will be true, then there needs to be information flow.
	\item \textbf{SA.} \textit{Send all}. Every input of the sender is relevant for the receiver, so it will eventually be sent if it it set to true. The parameter represents the number of input values and therefore the number of information classes.
\end{itemize}

Table~\ref{tab:experiments} shows the performance of the compositional synthesis approach. 
The column architecture (Arch.) signalizes for each benchmark if the information flow is directional (dir.) or bidirectional (bidir.). Column (Inflow send) indicates the running time for the sending process; where applicable, column  (Inflow rec.) indicates the running time for the synthesis of the process that only receives information. 

We compare the compositional approach to \textsc{BoSyHyper}, based on a standard encoding of distributed synthesis in HyperLTL (Inc. \textsc{BoSy}), and a specialized tool for distributed synthesis~\cite{DistributedBosy} (Distr. \textsc{BoSy}).
All experiments were performed on a MacBook Pro with a 2,8 GHz Intel Quad Core processor and 16 GB of RAM.
The timeout was 30 minutes.

Information flow guided synthesis outperforms the standard approaches, especially for more complex components.
For example, in the atomic commitment benchmark, scaling in the number of inputs does not impact the synthesis of the local processes, while Distr. \textsc{BoSy} eventually times out,  and the running time of Inc. \textsc{BoSy} increases faster than for the information flow synthesis.
For all approaches, the Send All benchmark is the hardest one to solve.
Here, each input that will eventually be set needs to be eventually sent, which leads to non-trivial communication over the shared variables and an increased state space to memorize the individual inputs.
Nevertheless, the information flow guided synthesis outperforms the other approaches and times out with parameter 3 because \textsc{BoSyHyper} cannot cope with the number of states needed.
Synthesizing a receiver that does not satisfy an information flow assumption is close to irrelevant for every benchmark run.
Since these processes are synthesized with local LTL specifications, scaling only in the number of local inputs or information that will eventually be received is easily possible.
Notably, these receivers are compatible with any implementation of the sender, whereas the solutions of the other approaches are only compatible for the same synthesis run.

\begin{table}[t]
\centering
\caption{The results of the experiments with execution times given in seconds. The cell is highlighted if it was faster than the other approaches, where the sum of sender and receiver is taken as reference.}
\resizebox{0.9\textwidth}{!}{
		\begin{tabular}{l | c | c | c | c | c | c }
			Bench.~ & ~Arch.~ & ~Par.~ & ~Inflow send. &~Inflow rec. &  ~Distr.\textsc{BoSy}~  & ~Inc. \textsc{BoSy}~ \\
			 \hline
			AC & dir & 1 & 0.92 & 0.70 & \textbf{1.41} & 2.31 \\
			 & dir & 2 & \textbf{0.36} & \textbf{1.28}    & 2.86 &  2.30\\
			 & dir & 3 & \textbf{0.92} & \textbf{0.68} & 2.46 &  2.55\\
			 & dir & 4 & \textbf{0.92} & \textbf{0.79} & 720.60 & 3.41 \\
			 & dir & 5 &\textbf{0.92} &\textbf{0.68} &  TO & 9.27\\
			 & bidir &  1 &  1.45 & - & \textbf{0.96} & 9.27\\
			 & bidir &  2 &  \textbf{2.49} & - & TO & TO \\
			 & bidir &  3 &  \textbf{79.18}& - & TO & TO\\
			 & bidir &  4 &  TO& - & TO& TO \\ 
			 \hline
			EC & dir &  1 & 0.68 & 1.87 & \textbf{0.92} & 2.556 \\
			 & dir & 2 & 0.94 & 1.85 & \textbf{0.96} &  3.90 \\
			 & dir & 3 & \textbf{202.09} & \textbf{1.78} &  TO & TO\\
			 & dir & 4 & TO & TO &  TO & TO\\
			 & bidir &  1 &   \textbf{3.77} & - & 4.63 & 147.46\\
			 & bidir &  2 &  TO & - & TO & TO\\
\hline
			SA & dir &  1 & 1.31 & 0.92 & 2.21 & \textbf{1.579}\\
			 & dir & 2 &  \textbf{1.78} & \textbf{0.92} & 27.47 & TO\\
			 & dir & 3 &  TO & 1.08 & TO & TO\\
		\end{tabular}
}
	\label{tab:experiments}
\end{table}

\section{Related Work}\label{sec:relatedwork}
Compositional synthesis is often studied in the setting of \emph{complete information}, where all processes have access to all environment outputs~\cite{KupfermanPV06, KuglerS09, FiliotJR10, FinkbeinerP20}. In the following, we focus on compositional approaches for the synthesis of distributed systems, where the processes have incomplete information about the environment outputs. Compositionality has been used to improve distributed synthesis in various domains, including reactive controllers~\cite{DBLP:conf/cav/AlurMT16, DBLP:conf/atva/Hecking-Harbusch19}.
Closest to our approach is assume-guarantee synthesis~\cite{DBLP:conf/tacas/ChatterjeeH07,DBLP:conf/tacas/BloemCJK15}, which relies on behavioral guarantees of the processs behaviour and assumptions about the behavior of the other processes.
Recently, an extension of assume-guarantee synthesis for distributed systems was proposed \cite{DBLP:journals/tcad/MajumdarMSZ20}, where the assumptions are iteratively refined.
Using a weaker winning condition for synthesis, remorse-free dominance \cite{DammF14} avoids the explicit construction of assumptions and guarantees, resulting in implicit assumptions. 
A recent approach~\cite{DBLP:conf/atva/FinkbeinerP21} uses behavioral guarantees in the form of certificates to guide the synthesis process. Certificates specify partial behaviour of each component and are iteratively synthesized. The fundamental difference between all these approaches to the current work is that the assumptions are behavioral. To the best of our knowledge, this is the first synthesis approach based on information-flow assumptions.
While there is a rich body of work on the verification of information-flow properties (cf. \cite{DBLP:conf/vmcai/DimitrovaFKRS12, DBLP:conf/cav/FinkbeinerRS15 , DBLP:conf/csfw/YasuokaT10}), and the synthesis from information-flow properties and other hyperproperties has also been studied before (cf. \cite{HyperBosy}), the idea of utilizing hyperproperties as assumptions for compositional synthesis of distributed systems is new.

\section{Conclusion}\label{sec:conclusion}
The approach of the paper provides the foundation for a new class of distributed synthesis algorithms, where the assumptions refer to the flow of information and are represented as hyperproperties. In many situations, necessary information flow assumptions exist even if there are no necessary behavioral assumptions.  There are at least two major directions for future work. The first direction concerns the insight that compositional synthesis profits from the generality of hyperproperties; at the same time, synthesis from hyperproperties is much more challenging than synthesis from trace properties. To address this issue, we have introduced the more practical method in Section~\ref{sec:practical}, which replaces locality, a hyperproperty, with the component specification, a trace property. However, this method is limited to information flow assumptions that refer to a finite amount of information. It is very common that the required amount of information is infinite in the sense that the same type of information must be transmitted again and again. We conjecture that our method can be extended to such situations. 

A second major direction is the extension to distributed systems with more than two processes. The two-process case has the advantage that the assumptions of one process must be guaranteed by the other. With more than two processes, the localization of the assumptions becomes more difficult or even impossible, if multiple processes have (partial) access to the required information. 

\bibliographystyle{splncs04} 
\bibliography{Paper.bib}

\begin{thebibliography}{10}
\providecommand{\url}[1]{\texttt{#1}}
\providecommand{\urlprefix}{URL }
\providecommand{\doi}[1]{https://doi.org/#1}

\bibitem{DBLP:conf/cav/AlurMT16}
Alur, R., Moarref, S., Topcu, U.: Compositional synthesis of reactive
  controllers for multi-agent systems. In: Chaudhuri, S., Farzan, A. (eds.)
  Computer Aided Verification - 28th International Conference, {CAV} 2016,
  Toronto, ON, Canada, July 17-23, 2016, Proceedings, Part {II}. Lecture Notes
  in Computer Science, vol.~9780, pp. 251--269. Springer (2016).
  \doi{10.1007/978-3-319-41540-6\_14},
  \url{https://doi.org/10.1007/978-3-319-41540-6\_14}

\bibitem{DistributedBosy}
Baumeister, J.E.: Encodings of Bounded Synthesis of Distributed Systems. {B.Sc.
  Thesis, Saarland University} (2017)

\bibitem{DBLP:conf/tacas/BloemCJK15}
Bloem, R., Chatterjee, K., Jacobs, S., K{\"{o}}nighofer, R.: Assume-guarantee
  synthesis for concurrent reactive programs with partial information. In:
  Baier, C., Tinelli, C. (eds.) Tools and Algorithms for the Construction and
  Analysis of Systems - 21st International Conference, {TACAS} 2015, Held as
  Part of the European Joint Conferences on Theory and Practice of Software,
  {ETAPS} 2015, London, UK, April 11-18, 2015. Proceedings. Lecture Notes in
  Computer Science, vol.~9035, pp. 517--532. Springer (2015).
  \doi{10.1007/978-3-662-46681-0\_50},
  \url{https://doi.org/10.1007/978-3-662-46681-0\_50}

\bibitem{DBLP:conf/tacas/ChatterjeeH07}
Chatterjee, K., Henzinger, T.A.: Assume-guarantee synthesis. In: Grumberg, O.,
  Huth, M. (eds.) Tools and Algorithms for the Construction and Analysis of
  Systems, 13th International Conference, {TACAS} 2007, Held as Part of the
  Joint European Conferences on Theory and Practice of Software, {ETAPS} 2007
  Braga, Portugal, March 24 - April 1, 2007, Proceedings. Lecture Notes in
  Computer Science, vol.~4424, pp. 261--275. Springer (2007).
  \doi{10.1007/978-3-540-71209-1\_21},
  \url{https://doi.org/10.1007/978-3-540-71209-1\_21}

\bibitem{HyperLTL}
Clarkson, M.R., Finkbeiner, B., Koleini, M., Micinski, K.K., Rabe, M.N.,
  S{\'{a}}nchez, C.: Temporal logics for hyperproperties. In: Abadi, M.,
  Kremer, S. (eds.) Principles of Security and Trust - Third International
  Conference, {POST} 2014, Held as Part of the European Joint Conferences on
  Theory and Practice of Software, {ETAPS} 2014, Grenoble, France, April 5-13,
  2014, Proceedings. Lecture Notes in Computer Science, vol.~8414, pp.
  265--284. Springer (2014). \doi{10.1007/978-3-642-54792-8\_15},
  \url{https://doi.org/10.1007/978-3-642-54792-8\_15}

\bibitem{ClarksonS10}
Clarkson, M.R., Schneider, F.B.: Hyperproperties. Journal of Computer Security
  \textbf{18}(6),  1157--1210 (2010)

\bibitem{DammF14}
Damm, W., Finkbeiner, B.: {Automatic Compositional Synthesis of Distributed
  Systems}. In: Jones, C.B., Pihlajasaari, P., Sun, J. (eds.) {FM} 2014: Formal
  Methods - 19th International Symposium, Singapore, May 12-16, 2014.
  Proceedings. Lecture Notes in Computer Science, vol.~8442, pp. 179--193.
  Springer (2014). \doi{10.1007/978-3-319-06410-9\_13}

\bibitem{DBLP:conf/vmcai/DimitrovaFKRS12}
Dimitrova, R., Finkbeiner, B., Kov{\'{a}}cs, M., Rabe, M.N., Seidl, H.: Model
  checking information flow in reactive systems. In: Kuncak, V., Rybalchenko,
  A. (eds.) Verification, Model Checking, and Abstract Interpretation - 13th
  International Conference, {VMCAI} 2012, Philadelphia, PA, USA, January 22-24,
  2012. Proceedings. Lecture Notes in Computer Science, vol.~7148, pp.
  169--185. Springer (2012). \doi{10.1007/978-3-642-27940-9\_12},
  \url{https://doi.org/10.1007/978-3-642-27940-9\_12}

\bibitem{FiliotJR10}
Filiot, E., Jin, N., Raskin, J.: {Compositional Algorithms for {LTL}
  Synthesis}. In: Bouajjani, A., Chin, W. (eds.) Automated Technology for
  Verification and Analysis - 8th International Symposium, {ATVA} 2010,
  Singapore, September 21-24, 2010. Proceedings. Lecture Notes in Computer
  Science, vol.~6252, pp. 112--127. Springer (2010).
  \doi{10.1007/978-3-642-15643-4\_10}

\bibitem{fs05}
Finkbeiner, B., Schewe, S.: Uniform distributed synthesis. In: Proceedings of
  the 20th {ACM/IEEE} Symposium on Logic in Computer Science (LICS). pp.
  321--330 (2005)

\bibitem{HyperBosy}
Finkbeiner, B., Hahn, C., Lukert, P., Stenger, M., Tentrup, L.: Synthesizing
  reactive systems from hyperproperties. In: Chockler, H., Weissenbacher, G.
  (eds.) Computer Aided Verification - 30th International Conference, {CAV}
  2018, Held as Part of the Federated Logic Conference, FloC 2018, Oxford, UK,
  July 14-17, 2018, Proceedings, Part {I}. Lecture Notes in Computer Science,
  vol. 10981, pp. 289--306. Springer (2018).
  \doi{10.1007/978-3-319-96145-3\_16},
  \url{https://doi.org/10.1007/978-3-319-96145-3\_16}

\bibitem{FinkbeinerP20}
Finkbeiner, B., Passing, N.: {Dependency-Based Compositional Synthesis}. In:
  Hung, D.V., Sokolsky, O. (eds.) Automated Technology for Verification and
  Analysis - 18th International Symposium, {ATVA} 2020, Hanoi, Vietnam, October
  19-23, 2020, Proceedings. Lecture Notes in Computer Science, vol. 12302, pp.
  447--463. Springer (2020). \doi{10.1007/978-3-030-59152-6\_25}

\bibitem{DBLP:conf/atva/FinkbeinerP21}
Finkbeiner, B., Passing, N.: Compositional synthesis of modular systems. In:
  Hou, Z., Ganesh, V. (eds.) Automated Technology for Verification and Analysis
  - 19th International Symposium, {ATVA} 2021, Gold Coast, QLD, Australia,
  October 18-22, 2021, Proceedings. Lecture Notes in Computer Science, vol.
  12971, pp. 303--319. Springer (2021). \doi{10.1007/978-3-030-88885-5\_20},
  \url{https://doi.org/10.1007/978-3-030-88885-5\_20}

\bibitem{DBLP:conf/cav/FinkbeinerRS15}
Finkbeiner, B., Rabe, M.N., S{\'{a}}nchez, C.: Algorithms for model checking
  {HyperLTL} and {HyperCTL} {\^{*}}. In: Kroening, D., Pasareanu, C.S. (eds.)
  Computer Aided Verification - 27th International Conference, {CAV} 2015, San
  Francisco, CA, USA, July 18-24, 2015, Proceedings, Part {I}. Lecture Notes in
  Computer Science, vol.~9206, pp. 30--48. Springer (2015).
  \doi{10.1007/978-3-319-21690-4\_3},
  \url{https://doi.org/10.1007/978-3-319-21690-4\_3}

\bibitem{DBLP:conf/atva/Hecking-Harbusch19}
Hecking{-}Harbusch, J., Metzger, N.O.: Efficient trace encodings of bounded
  synthesis for asynchronous distributed systems. In: Chen, Y., Cheng, C.,
  Esparza, J. (eds.) Automated Technology for Verification and Analysis - 17th
  International Symposium, {ATVA} 2019, Taipei, Taiwan, October 28-31, 2019,
  Proceedings. Lecture Notes in Computer Science, vol. 11781, pp. 369--386.
  Springer (2019). \doi{10.1007/978-3-030-31784-3\_22},
  \url{https://doi.org/10.1007/978-3-030-31784-3\_22}

\bibitem{KuglerS09}
Kugler, H., Segall, I.: Compositional synthesis of reactive systems from live
  sequence chart specifications. In: Kowalewski, S., Philippou, A. (eds.) Tools
  and Algorithms for the Construction and Analysis of Systems, 15th
  International Conference, {TACAS} 2009, Held as Part of the Joint European
  Conferences on Theory and Practice of Software, {ETAPS} 2009, York, UK, March
  22-29, 2009. Proceedings. Lecture Notes in Computer Science, vol.~5505, pp.
  77--91. Springer (2009). \doi{10.1007/978-3-642-00768-2\_9}

\bibitem{kv01}
Kupferman, O., Vardi, M.Y.: Synthesizing distributed systems. In: Logic in
  Computer Science (LICS) (2001)

\bibitem{KupfermanPV06}
Kupferman, O., Piterman, N., Vardi, M.Y.: {Safraless Compositional Synthesis}.
  In: Ball, T., Jones, R.B. (eds.) Computer Aided Verification, 18th
  International Conference, {CAV} 2006, Seattle, WA, USA, August 17-20, 2006,
  Proceedings. Lecture Notes in Computer Science, vol.~4144, pp. 31--44.
  Springer (2006). \doi{10.1007/11817963\_6}

\bibitem{Kupferman+Vardi/99/Safety}
Kupferman, O., Vardi, M.: Model checking of safety properties. Formal Methods
  in System Design  \textbf{19} (09 1999). \doi{10.1023/A:1011254632723}

\bibitem{DBLP:journals/tcad/MajumdarMSZ20}
Majumdar, R., Mallik, K., Schmuck, A., Zufferey, D.: Assume-guarantee
  distributed synthesis. {IEEE} Trans. Comput. Aided Des. Integr. Circuits
  Syst.  \textbf{39}(11),  3215--3226 (2020). \doi{10.1109/TCAD.2020.3012641},
  \url{https://doi.org/10.1109/TCAD.2020.3012641}

\bibitem{LTL}
Pnueli, A.: The temporal logic of programs. In: 18th Annual Symposium on
  Foundations of Computer Science, Providence, Rhode Island, USA, 31 October -
  1 November 1977. pp. 46--57. {IEEE} Computer Society (1977).
  \doi{10.1109/SFCS.1977.32}, \url{https://doi.org/10.1109/SFCS.1977.32}

\bibitem{PnueliR90}
Pnueli, A., Rosner, R.: {Distributed Reactive Systems Are Hard to Synthesize}.
  In: 31st Annual Symposium on Foundations of Computer Science, St. Louis,
  Missouri, USA, October 22-24, 1990, Volume {II}. pp. 746--757. {IEEE}
  Computer Society (1990). \doi{10.1109/FSCS.1990.89597}

\bibitem{DBLP:journals/ijfcs/ScheweF07}
Schewe, S., Finkbeiner, B.: Semi-automatic distributed synthesis. Int. J.
  Found. Comput. Sci.  \textbf{18}(1),  113--138 (2007)

\bibitem{ReasoningAboutInfiniteComputationPaths}
Wolper, P., Vardi, M.Y., Sistla, A.P.: Reasoning about infinite computation
  paths (extended abstract). In: 24th Annual Symposium on Foundations of
  Computer Science, Tucson, Arizona, USA, 7-9 November 1983. pp. 185--194.
  {IEEE} Computer Society (1983). \doi{10.1109/SFCS.1983.51},
  \url{https://doi.org/10.1109/SFCS.1983.51}

\bibitem{DBLP:conf/csfw/YasuokaT10}
Yasuoka, H., Terauchi, T.: Quantitative information flow - verification
  hardness and possibilities. In: Proceedings of the 23rd {IEEE} Computer
  Security Foundations Symposium, {CSF} 2010, Edinburgh, United Kingdom, July
  17-19, 2010. pp. 15--27. {IEEE} Computer Society (2010).
  \doi{10.1109/CSF.2010.9}, \url{https://doi.org/10.1109/CSF.2010.9}

\end{thebibliography}

\end{document}